\lstdefinestyle{mypython}{
    language=Python,
    backgroundcolor=\color{white},
    basicstyle=\ttfamily,
    commentstyle=\color{gray},
    keywordstyle=\color{blue},
    stringstyle=\color{red},
    showstringspaces=false,
    numbers=left,
    numberstyle=\tiny\color{gray},
    stepnumber=1,
    frame=single,
    tabsize=4,
    morekeywords={Server, Request, Source}, % Add your custom keywords here
    keywordstyle=[2]\bfseries\color{blue}, % Bold style for custom keywords
}
\definecolor{codegray}{rgb}{0.5,0.5,0.5}
\definecolor{codepurple}{rgb}{0.58,0,0.82}
\definecolor{backcolour}{rgb}{0.95,0.95,0.92}
\lstdefinestyle{mystyle}{
    backgroundcolor=\color{backcolour},   
    commentstyle=\color{codegray},
    keywordstyle=\color{blue},
    numberstyle=\tiny\color{codegray},
    stringstyle=\color{codepurple},
    basicstyle=\ttfamily\footnotesize,
    breakatwhitespace=false,         
    breaklines=true,                 
    captionpos=b,                    
    keepspaces=true,                 
    numbers=left,                    
    numbersep=5pt,                  
    showspaces=false,                
    showstringspaces=false,
    showtabs=false,                  
    tabsize=4
}
\newtheorem{remark}{Remark}
\newcommand{\Mahmoud}[1]{{\color{blue} \textbf{Mahmoud:} #1}}
\newcommand{\RM}[1]{{\color{red} \textbf{RM:} #1}}
\newcommand{\Ale}[1]{{\color{orange} \textbf{Alexandra:} #1}}
\def\reals{\mathbb{R}}
\def\nats{\mathbb{N}}
\def\prob{\mathbb{P}}
\def\M{\mathcal{M}}
\newcommand{\tmix}{\tau_{\textsf{mix}}}
\def\piinf{\pi_{ss}}
\def\timeout{{\boldsymbol{\tau}}}
\def\retry{{\boldsymbol{\rho}}}
\def\low{{\mathsf{Low}}}
\def\high{{\mathsf{High}}}
\def\simulator{{\mathsf{sim}}}
\def\tick{{\mathsf{T}}}
\def\dataset{{\mathcal D}}
\begin{document}

%%
%% The "title" command has an optional parameter,
%% allowing the author to define a "short title" to be used in page headers.
\title{Formal Analysis of Metastable Failures in Software Systems}

%%
%% The "author" command and its associated commands are used to define
%% the authors and their affiliations.
%% Of note is the shared affiliation of the first two authors, and the
%% "authornote" and "authornotemark" commands
%% used to denote shared contribution to the research.
\author{
		Peter Alvaro
		%Your Institution
	}
\affiliation{%
	\institution{UC Santa Cruz and AWS}
	\country{USA}
}
	\author{
		Rebecca Isaacs
		%Your Institution
	}
\affiliation{%
	\institution{AWS}
	\country{USA}
}
	\author{
		Rupak Majumdar
		%Your Institution
	}
\affiliation{%
	\institution{MPI-SWS and AWS}
	\country{Germany}
	}
	\author{
		Kiran-Kumar Muniswamy-Reddy
		%Your Institution
	}
\affiliation{%
	\institution{AWS}
	\country{USA}
}
	\author{
		Mahmoud Salamati
		%Second Institution
		% copy the following lines to add more authors
	}
\affiliation{%
	\institution{MPI-SWS}
	\country{Germany}
}
	\author{
		Sadegh Soudjani
		%Name Institution
	} 
\affiliation{%
	\institution{MPI-SWS and University of Birmingham}
	\country{Germany}
}
% end author
% \author{Rupak Majumdar}
% %\authornote{Both authors contributed equally to this research.}
% \email{rupak@mpi-sws.org}
% \orcid{1234-5678-9012}
% \affiliation{%
%   \institution{MPI-SWS}
%   \city{Kaiserslautern}
%   \country{Germany}
% }
% \author{Mahmoud Salamati}
% %\authornotemark[1]
% \email{msalamati@mpi-sws.org}
% \affiliation{%
%   \institution{MPI-SWS}
%   \city{Kaiserslautern}
%   \country{Germany}
% }

% \author{Sadegh Soudjani}
% \affiliation{%
%   \institution{MPI-SWS}
%   \city{Birmingham}
%   \country{UK}}
% \email{sadegh.soudjani@gmail.com}

\renewcommand{\shortauthors}{Alvaro et al.}
%%
%% The abstract is a short summary of the work to be presented in the
%% article.
\begin{abstract}
    Many large-scale software systems demonstrate \emph{metastable failures}.
In this class of failures, a stressor such as a temporary spike in workload causes the
system performance to drop and, 
subsequently, the system performance continues to remain low
even when the stressor is removed.
These failures have been reported by many large corporations and considered
to be a rare but catastrophic source of availability outages in cloud systems.

In this paper, we provide the mathematical foundations of metastability in request-response server systems.
We model such systems using a domain-specific langoogguage.
We show how to construct continuous-time Markov chains (CTMCs) that approximate the semantics of the programs through
modeling and data-driven calibration.
We use the structure of the CTMC models to provide a visualization of the \emph{qualitative} global behavior of the model. 
The visualization is a surprisingly effective way to identify system parameterizations that cause a system to show 
metastable behaviors.

We complement the qualitative analysis with \emph{quantitative} predictions.
We provide a formal notion of metastable behaviors based on escape probabilities,
and show that metastable behaviors are related to the eigenvalue structure of the CTMC.
Our characterization leads to algorithmic tools to predict recovery times in 
metastable models of server systems.

We have implemented our technique in a tool for the modeling and analysis of
server systems.
Through models inspired by failures in real request-response systems, we show
that our qualitative visual analysis captures and predicts
many instances of metastability that were observed in the field in a matter of milliseconds.
When we compute recovery times based on our algorithms,
we find, as predicted, the times increase rapidly as the system parameters approaches  
metastable modes in the dynamics.

In summary, we provide the formal foundations and first analytical tools for analyzing metastability in software systems.

\end{abstract}

%%
%% The code below is generated by the tool at http://dl.acm.org/ccs.cfm.
%% Please copy and paste the code instead of the example below.
%%

\begin{CCSXML}
<ccs2012>
   <concept>
       <concept_id>10002950.10003648.10003688.10003689</concept_id>
       <concept_desc>Mathematics of computing~Queueing theory</concept_desc>
       <concept_significance>500</concept_significance>
       </concept>
   <concept>
       <concept_id>10011007.10010940.10011003.10011002</concept_id>
       <concept_desc>Software and its engineering~Software performance</concept_desc>
       <concept_significance>500</concept_significance>
       </concept>
 </ccs2012>
\end{CCSXML}

\ccsdesc[500]{Mathematics of computing~Queueing theory}
\ccsdesc[500]{Software and its engineering~Software performance}

%%
%% Keywords. The author(s) should pick words that accurately describe
%% the work being presented. Separate the keywords with commas.
\keywords{Metastability, Performance analysis, Queuing theory, CTMCs}

	\maketitle

	\section{Introduction}
\label{sec:intro}

A \emph{metastable failure} in a distributed system is characterized by a temporary failure whose effect persists over time, 
even after the failure condition goes away \cite{bronson2021metastable,Huang2022systemmetastability}.
They manifest in the following way.
A system processes requests in a ``normal'' mode and maintains a high goodput (throughput of useful work).
A temporary rare ``trigger'' event, such as a spike in the workload or a capacity loss in the service, makes the system transition to a degraded mode with low goodput.
However, the  
system remains ``stuck'' in the degraded mode even when the spike or the capacity loss goes away: goodput remains low
for a much longer time scale than the trigger event.
Metastability is a rare source of failure in distributed systems, but a surprisingly common culprit in widely reported outages in cloud systems \cite{bronson2021metastable,AWS2015,AWS2014,AWS2021_persistent,AWS2024}.

A common example of metastable failures is a \emph{retry storm} at a server.
Retries are a mechanism in distributed systems to deal with failures: if a request is not responded to within a certain timeout, something went wrong and the client is advised to retry the request.
While retries are an excellent mechanism to mitigate transient failures,
in rare occasions, they may form a sustaining effect: the additional workload from retries prevents the system to respond to requests on time, thereby leading to further client-side retries that increases the workload.
In the worst case, the retry storm propagates to multiple services, leading to a collapse in availability.

Most research in metastability in software systems has been empirical, through the analysis of case studies of system outage. 
Practitioners have observed systems stuck after a spike and subsequent work amplification
% , have debugged and restarted their servers, 
and have developed best practices to avoid bad behaviors.
Researchers have reproduced metastable behaviors in workload testing and developed a taxonomy of
triggers, amplification, and cascades.
However, despite significant operational and empirical work, we still lack theoretical understanding and 
tool support for \emph{predicting and analyzing} metastable behaviors.
It is our goal in this paper to provide a theoretical foundation for metastability and corresponding tool support.

Our paper is part of an ongoing, larger, effort to understand metastable failures in hyperscalers, as outlined
in a recent workshop paper \cite{analyzing-metastable-faiures-hotos-2025}. 
We focus here on the formal aspects of the larger context.

%\begin{comment}
\begin{figure}[t]
\centering
\begin{minipage}{0.38\textwidth}
\includegraphics[width=1\textwidth, height = 1\textwidth]{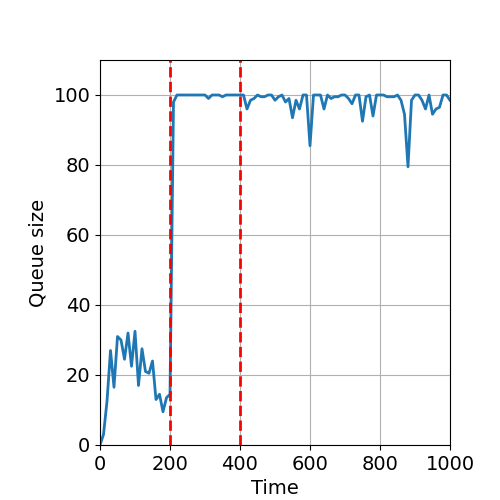}
\caption*{(a)}
\end{minipage}
\begin{minipage}{0.6\textwidth}
	\hspace{-0.4cm}\includegraphics[width=1.2\textwidth]{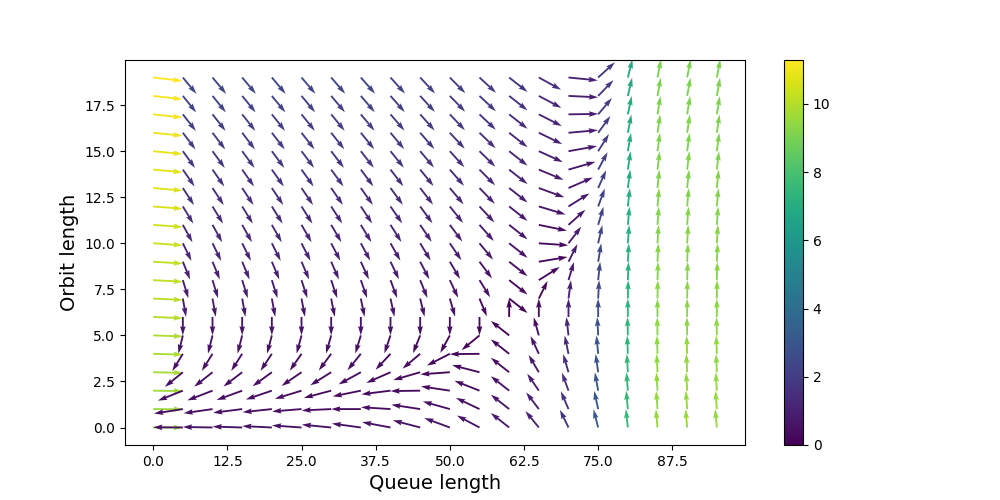}
    \caption*{(b)}
\end{minipage}

\caption{
%\textcolor{red}{%Add space before rps in figure (a). Add comma after rps. Add "rate" before 10 rps. Bound on retries is 3. Are you using RPS or rps? You use RPS in the main text. 
%Can you generate the other three figures with the same size instead of rescaling them?} 
A metastable failure and qualitative predictions from a formal model. 
(a) A simulated run of the example from \cref{fig:motivating-example}: the nominal arrival rate is 9.5 RPS. Between time 200s and 400s (between the red lines),
there is a load spike and the arrival rate is 20 RPS. 
The load goes back to the baseline at 400s.
In this simulation, the queue fills up but does not empty even after a further 600s after the load goes down. 
(b) A visualization of the stochastic dynamics of a CTMC model of the system. 
The arrows show the state change with the highest probability. The color of the arrow
represents the strength of the probability relative to the other transitions. %\textcolor{red}{numbers and text in the figure are too small, the font size is different from other figures. Replace queue length and orbit length with queue size and orbit size. Also check the usage in the text. length and size are two different things the way we use them in the text.}
}
\label{fig:intro}
\end{figure}
%\end{comment}

\paragraph{Motivating Example}
While metastable failures occur in many forms, we restrict ourselves to the setting of retry storms in request-response systems.
In a nutshell, these are systems in which clients send  requests that are handled by one or more servers.
Servers enqueue requests to absorb variabilities in the arrival rate.
%When there is an outage, the queues can fill up, causing clients to retry their requests.
Relatively rare events such as load bursts can cause queues to fill to such an extent that client requests time out and retry.
A failure occurs when there is a self-sustaining feedback loop of these client retries that prevents the system from performing any useful work.
Request-response systems are important components of cloud infrastructure---for example, low volume, critical operations like health checking or configuration updates are implemented as request-response systems---and retry storms are a common source of outages in these systems.

\begin{wrapfigure}{r}{5.7cm}
\centering
\hspace*{-0.2cm}
%\vspace{0cm}
\includegraphics[scale = .5]{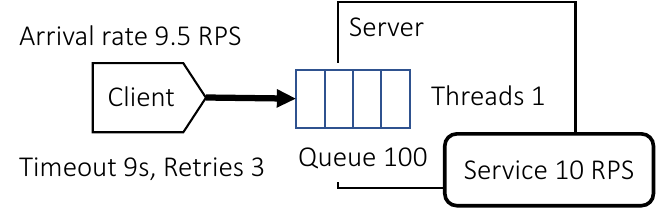}
\vspace*{-0.4cm}    
\caption{A simple example.}
\label{fig:motivating-example}
\end{wrapfigure}

As a canonical example of a retry storm, consider the following example (see \cref{fig:motivating-example}).
A system consists of a single server 
that serves requests with an exponential distribution with average rate of $\mu = 10$ requests per second (RPS). 
A client sends requests with an (independent) exponential distribution,
with average rate of $\lambda = 9.5$ RPS. 
%% \cref{fig:example} presents a snippet written in a domain-specific modeling language that represents this example. 
Each request has a timeout of 9 seconds and retries 3 times before giving up.
When a temporary load spike fills up the queue to a size about 100, 
it is observed empirically that the queue does not drain and 
the failure rate of requests remains high for over 600s after the spike (see \cref{fig:intro}(a)).

The single-server system above is a ``classic'' example of a metastable failure: the queues remain full and the useful work done by the system remains near zero
long after the stressor is removed.
For this, and other examples of server systems, our goal
is to design models and mathematical analyses that explain
what goes wrong (\textbf{Q0}) much quicker than load testing.

Specifically, we aim to answer the following important questions from a service provider's perspective.
First, for what values of a system's parameters (queue sizes, arrival and service rates, retry policies) can metastable failures occur (\textbf{Q1})?
Second, can we predict the recovery time of a system after it has failed (\textbf{Q2})?
Third, can we provide predictions on the recovery time for common mitigations, such as throttling requests or autoscaling servers (\textbf{Q3})?

The ability to model and answer \textbf{Q0--3} are of enormous practical value: empirical load testing, as practiced today, is expensive (each test can take a day or more to set up and run).
Thus, it is simply infeasible to explore the parameter space or to make predictions about recovery.

\paragraph{Our work: Modeling and Analyzing Request-Response Systems} 
In this paper, we provide a formal lens to metastable failures in request-response services. 
We start with a domain-specific language (DSL) to model servers and clients, queues, requests, timeouts, and retries, with a discrete-event simulation (DES) semantics.
While, in principle, exhaustive simulations over the parameter space can answer \textbf{Q1--3} with statistical
guarantees, the cost of simulation is too high for such a strategy to be effective.
Instead, we consider an abstract model of the system that is amenable to more efficient algorithmic analysis.
%
% We provide mathematical models to describe, visualize, and analyze metastable failures. 
% Based on these models, we develop formal tools to answer the three questions \textbf{Q1--3} above.
%
Since the domain involves timing and probabilities,
we select \emph{continuous-time Markov chains} (CTMCs) \cite{harchol2013performance} as our modeling formalism.
CTMCs are state-transition models, in which the evolution of the state happens probabilistically in continuous time.
In each state, the CTMC waits for some duration of time, drawn from an exponential distribution, before transitioning to a neighboring state.

Our first contribution is to construct abstract CTMC models for request-response systems in the DSL, following
insights from retrial queueing systems \cite{Artalejo2008,habibi2023msfmodel}.
This is quite nontrivial: the DES maintains a large amount of state (queued requests, timers, timeout handlers)
and some features are not Markovian (timeouts and retries).
We abstract the simulator state into 
the size of each queue (modeling the number of requests
in the system---either being served or waiting in 
queues---at a point in time) and 
the \emph{orbit} (modeling the average effect of requests being retried).
The transitions of the CTMC abstract away the operational
details of the simulator, and only consider the \emph{average} arrival rate, service rates, 
and retry rates.

The CTMC model abstracts away many details of actual systems, but allows us to make qualitative and quantitative
predictions about \textbf{Q1--3}.
However, a consequence of the abstraction is that the predictions of the model can deviate significantly from the operational behavior of the simulator.
Therefore, as a second step, we perform \emph{data-driven calibration}
of the CTMC model using simulation data.
We consider short simulation runs of the system, and use
these runs to calibrate the parameters of the abstract CTMC
to ensure that the trajectories of the CTMC agrees with
the simulator with respect to the CTMC state.

The \emph{ab initio} formal modeling and the calibration are synergistic: pure formal modeling deviates from data,
but learning from simulation traces without any prior structure performs poorly as well.
%
%If we were to stop here, 
%and actual deployments and can deviate from the real system behavior.
%Nevertheless, we show that it can reproduce and predict many phenomena observed in actual systems and 
%provide \emph{explanatory} tools to understand why and where metastable failures can occur.
%
We show empirically that the calibration is crucial in
obtaining precise quantitative predictions of real systems.
In particular, we show that our calibrated model can predict, statically, metastable behaviors in our simple example for 
different parameter ranges (partly answering \textbf{Q0}).

%

% In particular, we show how common patterns of server systems, including mixtures of request types, retries, timeouts, and dependent calls, 
% are modeled succinctly as CTMCs.
% For each server in the system, the states of the CTMC capture the queue (modeling the number of requests in the system at this point), and the \emph{orbit}
% (modeling the average effect of retrials in the system).

% We assume some well-formedness conditions such as irreducibility (every state can reach every other state) and ergodicity (the CTMC converges to a unique
% stationary distribution).

% Then, the long run behavior of the model may be classified as \emph{transient} (each state is visited finitely often almost surely), 
% \emph{null recurrent} (states are visited infinitely often almost surely but the expected return time can be infinite), 
% or \emph{positive recurrent} (states are revisited with bounded expected return time).
% In case of positive recurrence, there is also a unique stationary distribution $\mu$.

% 
% The first step in our mathematical modeling is to describe computer systems as stochastic dynamical systems.
% More precisely, we model a system as a Markov process $X$ over a state space $S$ and a transition probability matrix $P$.
% Such models of (continuous-time) Markov chains are not new: these models form the semantic basis in queueing theory.
% Intuitively, the process moves from state to state, independent of the past, according to the transition probability matrix.
% For simplicity, let us focus on the case that the process is positive recurrent.

%
\begin{wrapfigure}{r}{4.5cm}
%\begin{figure}
\centering
\vspace*{-0.1cm}
\hspace*{-5mm}
\includegraphics[clip, trim=0.9cm 0.9cm 0.9cm 0.9cm, scale = 0.15]{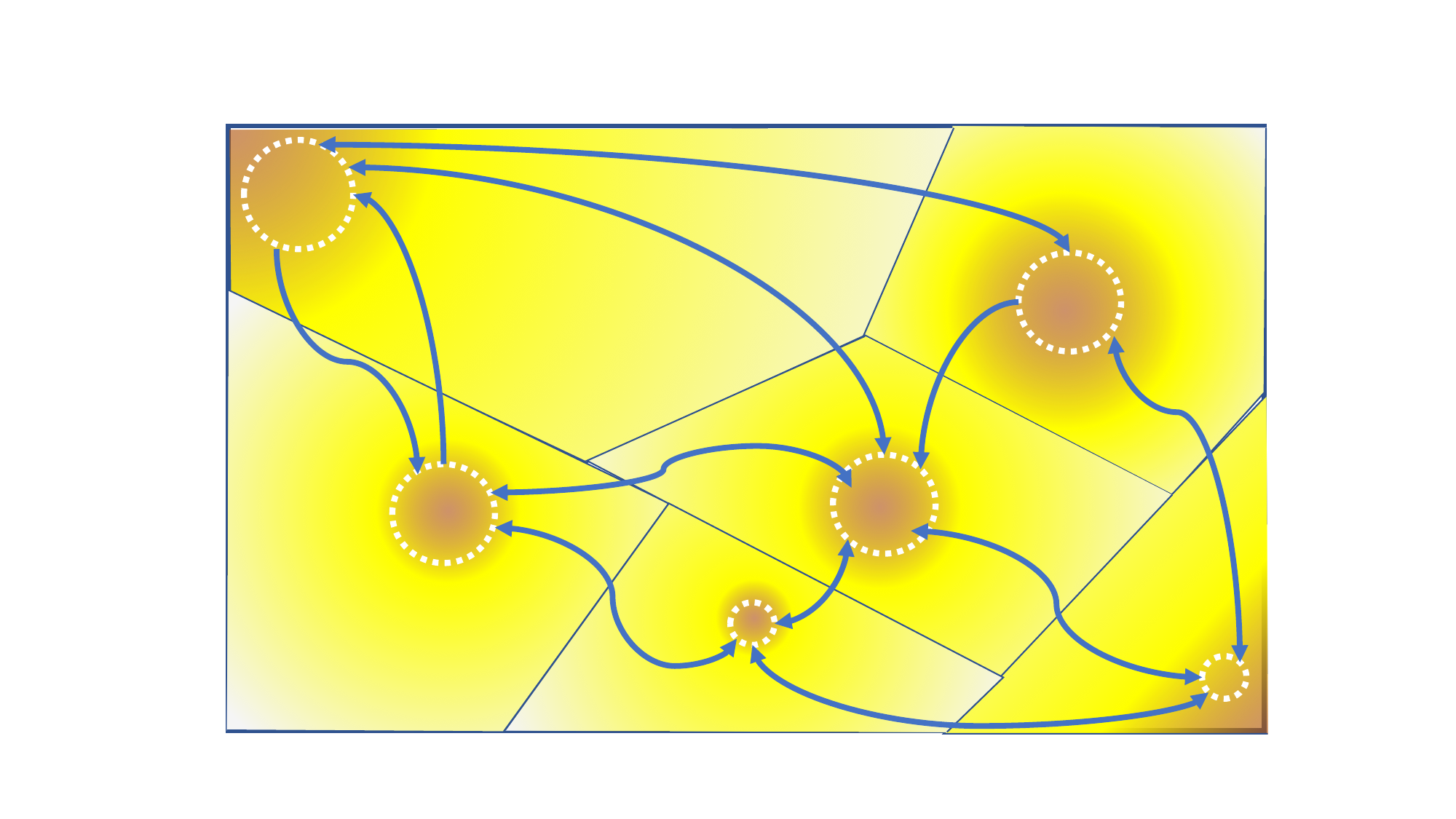}
\vspace*{-0.3cm}
\caption{Metastability, pictorially.}
\label{fig:metastable}
%\end{figure}
\end{wrapfigure}
\paragraph{What is Metastability?}
Having a fixed mathematical model (CTMCs), we focus on a formal characterization of metastable behaviors. 
Intuitively, metastability corresponds to the existence of two or more well-separated time scales, such that the system remains
in an ``almost invariant'' set in the short time scale but can visit different almost invariant sets in the long time scale.
Figure~\ref{fig:metastable} gives a visual depiction of the time scales: the orange balls 
denote almost invariant sets.
Once entered, the CTMC remains there for a long time but can move to a different almost invariant set over a long time horizon.
States outside the white balls enter one of the almost invariant sets over a short time scale.

% and the partition of the state space into
% disjoint sets $S_i$, $i\in I$, such that, when a state $X$ starts in $S_i$, on a \emph{short time scale} 
% it reaches a ``local'' equilibrium concentrated on $S_i$,
% but on a \emph{long time scale} it exits $S_i$ and moves to some $S_j$ with $j\neq i$, where it again reaches local equilibrium.
% We can imagine the dynamics of the system as moving between almost invariant sets: each $S_i$ has a small set $B_i \subseteq S_i$ that is visited very frequently by $X$ before it moves out of $S_i$---see Figure~\ref{fig:metastable}.

\begin{comment}
	As an informal example of metastability, imagine a forgetful scientist in a room in a house with many distractions.
	Doors between rooms are locked with combinational locks.
	The scientist amuses herself with the many distractions in the room, and occasionally tries to open the lock.
	Since she has forgotten the code, she tries some numbers at random but if they do not work, she gets distracted by the various other amusements.
	With a very small probability $p$ the lock will open and the scientist will visit another room (and stay there until she
	manages to open another door). If the scientist tries the lock every (expected) $T$ units, the time to
	transition between rooms will be $T/p$, which could be a much higher time scale than her usual pottering.
\end{comment}
%

% For a subset of states $D \subset S$, we write $\tau_D$ for the first hitting time of $X$ in $D$:
% \begin{equation}
% \label{eq:hitting_time}
%  \tau_D = \inf \set{ t > 0 \mid X(t) \in D}. 
% \end{equation}

Following the results in the theory of stochastic dynamical systems \cite{FW,Betz2016,BEGK2002,MetastabilityBook},
we make this intuition precise by defining metastability in a CTMC using \emph{escape probabilities}.
We show that the notion is robust by providing alternate
characterizations using the eigenvalues of the CTMC, thus answering
\textbf{Q0}.

Surprisingly, existing literature on probabilistic verification does not consider metastability as a temporal specification.
Probabilistic temporal logics focus either on transient
behaviors or stationary behaviors of the system \cite{BaierHHK03,BaierKH99,CY95,BdA95,BaierKatoen,Prism,Storm}.
Metastable behaviors provide a finer structure on the time-evolution of the system, and are not captured by logics such as CSL \cite{Aziz:2000} or probabilistic linear temporal logics \cite{CY95}.

Similarly, classical queueing theory \cite{harchol2013performance}
focuses on stability or instability of a system.
In fact, many standard queueing models, including the M/M/c queue, does not exhibit metastability!
A recent attempt \cite{habibi2023msfmodel} defined metastability as a large expected distance to the origin.
Unfortunately, this definition conflates metastable behaviors with unstable ones: an unstable system satisfies the definition but metastable behaviors occur in stable queues as well.

% A CTMC is called $\rho$-metastable if there exists a collection of disjoint sets $B_i \subset S$, $i\in I$, such that
% \[
% \frac{\sup_{x\not\in \bigcup_{i\in I} B_i} \Exp_x[\tau_{\bigcup_{i\in I} B_i}]}
% { \inf_{i\in I} \inf_{x\in B_i} \Exp_x[\tau_{\bigcup_{j\in I\setminus i} B_j}]} \leq \rho \ll 1,
% \]
% where $\tau_D$ denotes the hitting time for $D\subset S$, i.e., expected time to hit a state in $D$ for the first time.
% 
% Essentially, the different $B_i$s form the different metastable modes, and the definition states that the ratio between
% the worst expected hitting time, starting from a state not in any $B_i$s, to the set $\bigcup_{i\in I} B_i$ is much smaller than
% the best expected hitting time to transition between some $B_i$ to another.
% The ``small'' time scale is the time to hit a $B_i$ and stay there.
% The ``long'' time scale is the time to transition between two different $B_i$s.
%
%This is a semantic notion, but computationally not a very useful one: indeed, when we analyze a~model,
%we would like to predict the hitting times based on metastable states and not vice versa.
%However, as we shall see later, there are equivalent formulations of metastability that offer direct computational techniques, 
%e.g., by computing the eigenvalues of the model's transition matrix.

\paragraph{Qualitative Predictions}
While the notion of metastability is defined for any CTMC, we show a simple visualization for request-response systems.
Since the state of each server is two dimensional (its queue and its orbit), and the CTMC is sparse (each transition goes to a neighboring state),
there is a two-dimensional plot of the stochastic dynamics.
Our visualization captures the dominant direction of flow in the stochastic dynamics defined by the CTMC.
We have found that the visualization captures the \emph{qualitative} phenomenon of metastability in the global parameter space.
Moreover, even when the CTMC has many states, the visualization---which involves computing the transitions of a small number of states---can be produced in milliseconds for each server!
In contrast,
anecdotal evidence suggests that---even for simple queueing models---reproducing metastable failures by careful parameter selection required heroic effort.
This lets us answer \textbf{Q1}.

As an example, Figure~\ref{fig:intro}(b) 
shows the visualization for our running example. 
The x-axis is the queue length, and the y-axis is the orbit length.
The direction of each vector points to the most probable next state; the relative magnitude of the probability of moving in this direction is given by the color of the arrow.
Arrows to the left and to the bottom ``clear out'' queues and retrying requests; arrows to the right and to the top increase the queue and the rate of retries.
We can visually see the point of metastability: at a queue length of about 100, 
if there are enough retrials in the system, the system changes its qualitative dynamics.
Beyond this point, filled queues are likely to remain full.
These observations correspond to our intuition: the average latency of requests exceeds the timeout at this point, triggering retries and moving the dynamics ``away'' from a small queue.

\paragraph{Quantitative Predictions}
The visualization is backed up by quantitative predictions from the underlying CTMC.
In request-response systems, an important question is to quantify the \emph{recovery time} (the time taken for a system to go from a full queue (e.g., after a load spike)
back to the average queue (average queue size in the stationary distribution)), as well as the recovery time after adding throttling (see \cref{fig:quant}).
Using standard algorithms for CTMCs, we can compute recovery times---either exactly by solving a linear system or 
approximately through estimates of eigenvalues---and we show how metastable
regions in the visualization correspond to very large 
expected recovery times.
This lets us answer \textbf{Q2-3}.
We note that calibrating the model is essential to finding good
quantitative predictions.

\paragraph{Implementation}
We have implemented our analysis for metastability in server systems in an open source tool.
Our implementation provides a flow from the DSL and its simulator to a CTMC as well as visualization and analysis tools on the CTMC.
We show by a number of experiments that our analysis is able to explain, reproduce, or predict metastable failures in models of request-response systems.
Moreover, the qualitative analysis runs in milliseconds,
and the quantitative analysis runs in a few hours even for our largest examples.

In applying our tools to industrial examples in a hyperscaler, we have found that modeling a small number ($\leq 3$) of servers and queues is sufficient to reproduce many metastability issues.
In our experience, the predictions of the CTMC models 
allow us to find and to reproduce metastable effects within a few hours, rather than weeks.

The abstract CTMC models do not capture the system with all fidelity, and we still rely on the simulation (and emulation) to check predictions or perform further performance analyses. 
However, despite the abstraction, in an industrial context, we have found the abstraction and analysis indispensable to  find where to focus our efforts for simulation and workload testing.
Since workload testing of services is expensive, the abstract modeling can substantially reduce the required testing efforts.

\paragraph{Contributions}
We make the following contributions in this paper.
\begin{enumerate}
\item We formalize request-response systems in a DSL and show
how the simulation-based semantics of the DSL can be approximated by an abstract CTMC.
We provide a methodology that combines formal modeling with data-driven calibration to ensure accuracy of predictions.

\item We provide a formal foundation to metastable failures in software systems in terms of metastable states in CTMC models.
We define metastable states based on escape probabilities and also give a spectral characterization.
\item 
We show that our CTMC models 
provide qualitative (visual) 
information that predicts global parameters that lead to metastable behaviors.
The CTMC models also provide quantitative predictions about recovery times.
\item We show that our algorithms can be used to find metastable failures on models of real request-response systems.
\end{enumerate}

% defines metastability for CTMCs

%Section~\ref{sec:prelims} provides background information on continuous-time Markov chains (CTMCs) and 
% defines metastability for CTMCs. 
% Section~\ref{sec:ctmc_models} shows how queueing systems with various features can be modeled.
% Section~\ref{sec:experiments} illustrates how the CTMC model of a single-server queue with retries can exhibit metastable behavior in request processing. We also reconstruct a~well-known empirical phenomenon where, after a~load spike,
% a system recovers back to its ``queue empty'' mode sufficiently quickly only when the arrival rate is adjusted to a value below its original level. We discuss related work in Section~\ref{sec:related_work} and conclude in Section~\ref{sec:conclusion}.

        \section{Overview}
\label{sec:overview}

\paragraph{Motivating example: Modeling} 
Let us come back to the motivating example from \cref{fig:motivating-example}.
Our goal is to show how we model it as a continuous-time Markov chain (CTMC) and what analyses we can perform.
% Figure~\ref{fig:example} shows the example in a simple domain-specific modeling language.
% Programs in the modeling language can be compiled into a continuous-time Markov chain (CTMC):

The states of the CTMC will be pairs of integers $(u, v)$.
The first index $u$ tracks the number of requests in the system (either
being served by the server or waiting in the queue) 
and the second index $v$ tracks the number of requests in the ``orbit'' that failed and are currently waiting to be retried.
The first index tracks the workload in the system, the second tracks the work amplification due to retries.

Transitions between states are determined by two factors:
(1) the arrival rate from the client and the service rates, 
(2) the timeout and the retry policy.
An arriving request increases the number of requests in a queue from $u$ to $u+1$ with rate equal to the rate of request arrivals.
Finishing a request reduces the number of requests in the system, so the number of requests go from $u+1$ to $u$; this happens
at the service rate.
In addition, the number of requests can increase by putting a request from the orbit into the queue and the number of requests in
the orbit can increase at a rate determined by a rate computed from the arrival rate and the timeout. 
For each state, we can break the transition probabilities into two components: the first tracks the change in the queue axis
and the second tracks the change in the orbit axis. 
The \emph{nominal model} will use the parameters from the example---for example, the arrival rate will be 9.5, the service rate will be 10, and the other probabilities will be determined from the constants in the program.

However, we will calibrate the nominal transition rates using simulation data, ensuring that the trajectories produced by the calibrated CTMC closely align with those of the simulator.
In our case, the calibrated CTMC has parameters $\lambda' = 9.43$ and 
timeout $10.54$s.

% Figure~\ref{fig:example_plots} (Top Left) gives a simulation run showing metastability: the system initially has zero jobs queued, and 
% the number of requests remains relatively small until a spike in the arrival rate fills the queue with pending requests. 
% Metastable behavior is observed as the queue remains full, even long after the load spike has subsided.
%

\begin{figure}[t]
\begin{minipage}{0.6\textwidth}	\hspace{-0.4cm}\includegraphics[width=1.2\textwidth]{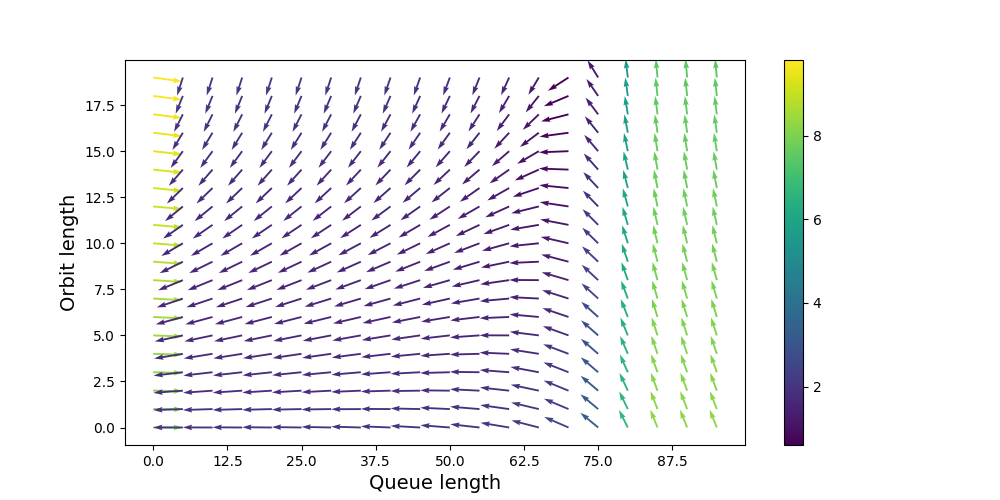}
    \caption*{(a)}
\end{minipage}
\begin{minipage}{0.38\textwidth}
\includegraphics[width=1\textwidth, height = 1\textwidth]{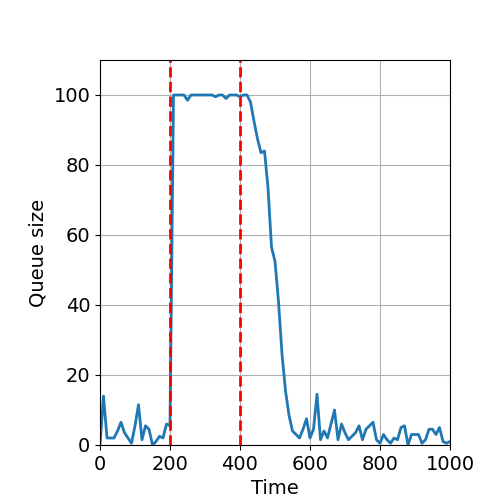}
\caption*{(b)}
\end{minipage}~
\caption{Throttling a system after a load spike to recover quickly.
(a) A visualization of the stochastic dynamics when the arrival rate is throttled to 8 RPS.
(b) A simulated run that confirms quick recovery. 
}
\label{fig:example_plots}
\end{figure}

\begin{figure}[t]
\begin{minipage}{0.38\textwidth}
\includegraphics[width=1\textwidth, height = .95\textwidth]{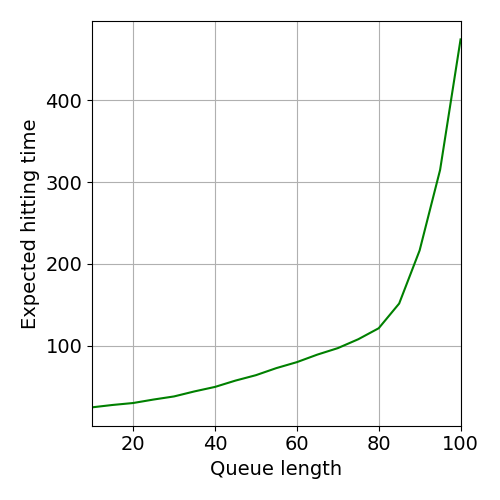}\vspace{-1.5em}
\caption*{(a)}
\end{minipage}%
\begin{minipage}{0.38\textwidth}
\includegraphics[width=1\textwidth, height = .95\textwidth]{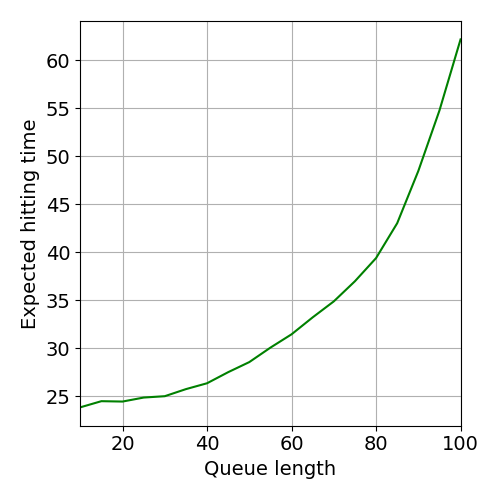}\vspace{-1.5em}
\caption*{(b)}
\end{minipage}~
\caption{Quantitative predictions of the time for a full queue to return to its average size pre-load spike, as a function of queue length.
(a) For the metastable case ($\lambda=9.5$ RPS), (b) for the throttled case ($\lambda = 8$ RPS).}
\label{fig:quant}
\end{figure}

\paragraph{Warm-up: No retries}
Let us first consider the special case in which requests are not retried and the state is only one dimensional. 
This special case corresponds to the classical model of M/M/1 queues: requests arrive 
with rate $\lambda=9.5$, they are served with rate $\mu=10$.
The \emph{qualitative} dynamics of the CTMC has two forms:
if $\lambda < \mu$,  
the transition rate to the left will dominate the rate to the right,
and conversely, if $\lambda > \mu$.
(As an exception, at the state 0, the arrows will always point right, since a new arrival will be immediately served.)
The following figure visualizes the dynamics in these two cases,
arrows point right if arrivals outweigh service times, and left otherwise:
%  (cf. \cref{fig:mmc_vis}).

\vspace*{3mm}
\begin{tikzpicture}[
    tbox/.style={font=\scriptsize, minimum width=1.5cm, minimum height=1cm, align=center},
    node/.style={circle, draw, minimum size=0.15cm, inner sep=0},
    below id/.style={font=\scriptsize, below=1mm},
    edge/.style={->, thick},
    dotted edge/.style={->, dotted, thick}
]

% Define nodes
\node[tbox] (t) at (-0.8, 0.5) {$\lambda > \mu$:};
\node[node] (A) at (0, 0.5) {};
\node[below id] at (A.south) {0};

\node[node] (B) at (1, 0.5) {};
\node[below id] at (B.south) {1};

\node[node] (C) at (2, 0.5) {};
\node[below id] at (C.south) {2};

\node[node] (D) at (3.5, 0.5) {};
\node[below id] at (D.south) {N};

% Draw directed edges
\draw[edge] (A) -- (B);
\draw[edge] (B) -- (C);
\draw[dotted edge] (C) -- (D);

% Define nodes
\node[tbox] (t) at (5.2, 0.5) {$\lambda < \mu$:};
\node[node] (AA) at (6, 0.5) {};
\node[below id] at (AA.south) {0};

\node[node] (BB) at (7, 0.5) {};
\node[below id] at (BB.south) {1};

\node[node] (CC) at (8, 0.5) {};
\node[below id] at (CC.south) {2};

\node[node] (DD) at (9.5, 0.5) {};
\node[below id] at (DD.south) {N};

% Draw directed edges
\draw[edge] (AA) -- (BB);
\draw[edge] (CC) -- (BB);
\draw[dotted edge] (DD) -- (CC);

\end{tikzpicture}

Intuitively, these cases correspond to random walks on the line with a drift to the left or to the right.
When the arrival and service rates match exactly, the two directions balance out; this corresponds to a random walk with equal probability
to move left or right.

Classical results on random walks back up the visual analysis with quantitative results.
In the first case, a full  queue will remain full 
(the queueing system is called \emph{unstable}), and queues may drain with
exponentially small probability.
%the time to reach to a small queue are exponential in the size of the queue.
%(cf.~Prop.~\ref{prop:mm1_hitting_time})),
In the second case, a small queue will tend to remain small (the system is \emph{stable}), and
a full queue will empty out in time linear in the size of the queue. 
% In the balanced case, the hitting time is quadratic in the size of the queue. 
% See Proposition~\ref{prop:mm1_hitting_time} for a formal statement on these aspects.

Interestingly, there is no metastable behavior in M/M/1 queues; the behaviors are stable or unstable, based on the two cases.
This partly explains why a mathematical study of metastable modes is conspicuously absent in the queueing and probabilistic modeling literatures.

\paragraph{Visualization of retries and metastability}
Let us return to our example. 
The dynamics are richer when retries are involved.
Figure~\ref{fig:intro}(b) shows a visualization of the dynamics of the CTMC with retries.
Each arrow in the figure represents a normalized vector, whose direction provides the \emph{most probable} relative change in the queue and orbit axes,
and whose magnitude provides the normalized rate of transitions.
When queues are small and the orbit has few requests, the 
arrows point ``downward'' and ``leftward''. 
Thus, the queue and orbit clear out with high probability. 
At a critical point---around a queue size of 65---when
many requests time out, 
the dynamics ``drifts up and right'', causing long queues and retries to amplify. 
This marks a point of metastability, where the system keeps queues full due to retries.

From the visualization, one expects that the time to recover from a full queue to the average queue  
takes a sharp turn
as the queue length increases and the dynamics moves to a metastable regime.

The visualization enables us to answer qualitative questions about metastable behaviors across the global space of configuration parameters (answering \textbf{Q1}).
As we show below, the qualitative intuition can be confirmed quantitatively.

\paragraph{Quantitative analysis}
%We have found that this visualization of the dynamics already provides a strong \emph{qualitative} intuition on metastable behavior.
%But let us back it up with more quantitative analysis of the CTMCs.
%Figure~\ref{fig:example_plots} (bottom left) shows the effect of queue length on the time for a full queue to drain in the system. As the arrival rate increases, the expected return time, that is the expected time needed to transition from the high mode---defined as the set of states where more than 90\% of the queue is full---to the low mode---where less than 10\% of the queue is occupied---increases, due to the added latencies introduced by the higher arrival rates. \cref{fig:example_plots} (bottom right) shows that when the arrival rate is set to 6.5, the expected return times are significantly reduced, indicating that the system is stable. It is important to note that the computed hitting times are based on a CTMC model with a bounded orbit size (with an upper limit of 20) to ensure computational feasibility. 
%
Quantitative  analysis of the CTMC  backs up the intuition from the visualization. 
For example, we can compute the \emph{expected hitting time}.
Figure~\ref{fig:quant}(a) shows that the expected hitting time from a full queue (90\% or more of the queue length) to the average queue
increases with the queue length, with a sharp increase around queue length of 75, the metastable point in the visualization, and increases rapidly as the queue length goes beyond that.
The underlying analysis is based on solving linear systems of equations and is implemented using efficient numerical linear algebra routines---for this model, the run time is a few minutes.

Thus, in addition to intuition about the global parameter space, the CTMC model allows us to make quantitative predictions about system recovery (answering \textbf{Q2}).

\paragraph{Effect of recovery strategies}
Finally, we can use the same analyses to answer \textbf{Q3}: 
what is the effect of a recovery policy on recovery time?
In practice, one way to recover a system is to throttle the incoming requests to a lower value, so that the queues can clear.
Figure~\ref{fig:example_plots}(a) shows a visualization of the dynamics when the arrival rate is throttled at 8 RPS.
The dynamics drifts down and left, so this is a good choice for throttling the input.
Figure \ref{fig:quant}(b) predicts that the recovery time for this 
throttled arrival rate should be low.
Figure~\ref{fig:example_plots}(b) confirms through a simulation run that the system recovers quickly when the arrival rate is throttled to 8 RPS after the load spike. 

\begin{figure}[t]
\centering
\vspace*{-0cm}
\hspace*{0mm}\includegraphics[scale=0.5]{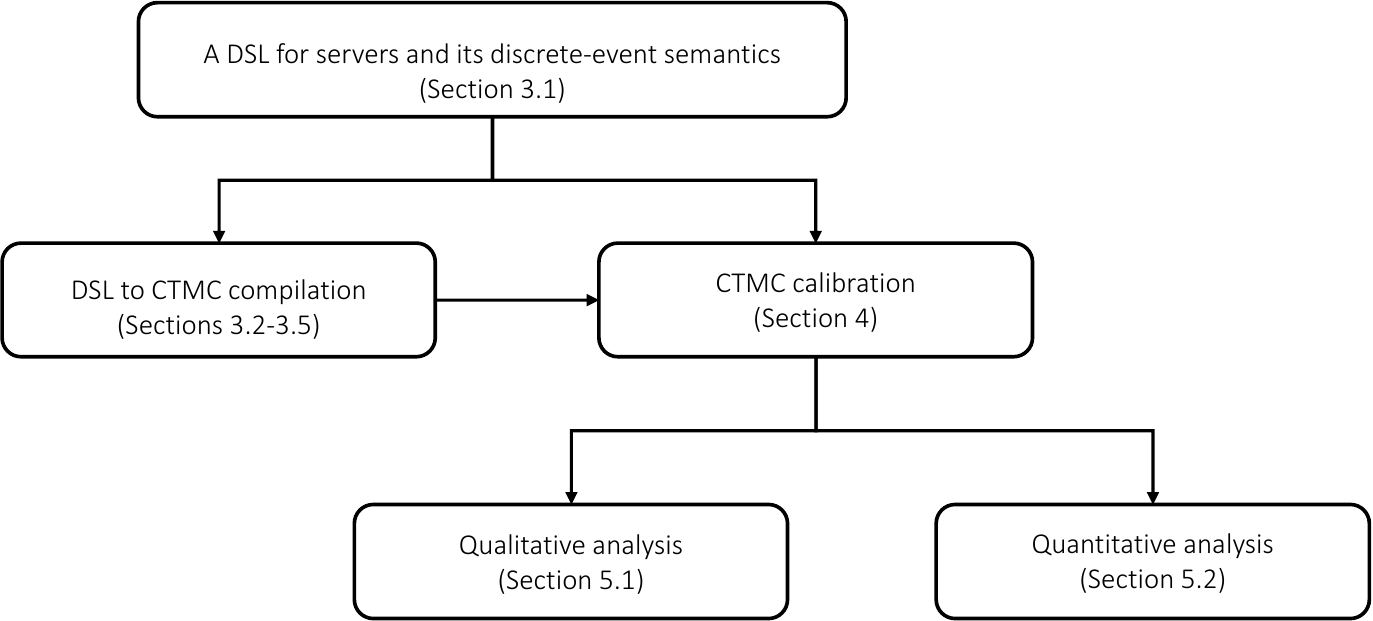}
\caption{Overall scheme for analyzing metastability in server systems.}
\vspace*{-0.0cm}
\label{fig:overall_diagram}
\end{figure}

\paragraph{Outline} 
%\cref{fig:overall_diagram} illustrates the structure of our proposed solution. 
In the remainder of this paper, we provide a detailed description of different components depicted in \cref{fig:overall_diagram}. Section~\ref{sec:ctmc_models} introduces the syntax and semantics of the DSL for specifying server systems, along with its compilation into CTMCs. Section~\ref{sec:ctmc_calibration} explains how simulation trajectories can be leveraged to calibrate the \emph{ab initio} CTMCs obtained from direct DSL compilation. In Section~\ref{sec:analysis}, we present both qualitative and quantitative analyses aimed at characterizing metastability in server systems. Finally, Section~\ref{sec:experiments} reports experimental results that demonstrate the effectiveness of our analysis, and Section~\ref{sec:conclusion} concludes the paper.

\section{Modeling Request-Response Systems as CTMCs}\label{sec:ctmc_models}

\subsection{A Domain-Specific Language for Systems}
\label{sec:dsl}

We express request-response systems in a simple DSL
(embedded in Python)
that provides abstractions for servers and clients.
A \emph{server} maintains a queue of requests and a pool of workers.
The workers pull requests off the queue and process them asynchronously.
Processing a request can incur a delay determined by the service time distribution for that request type.
Moreover, processing a request may make further calls
to downstream servers.
In our model, the worker processing a request blocks
until the downstream calls return.

\emph{Clients} send requests to the server.
Clients generate new requests based on an arrival distribution.
Each request has a timeout 
as well as a retry policy (e.g., number of retries, backoff).
Clients enqueue their request on a server.
If an enqueued request times out, the client may send further
attempts to the server based on the retry policy.

Programs in the DSL, such as the simple example in \cref{fig:motivating-example},
are acyclic graphs connecting clients and servers to other servers.
The semantics of a program is given by a discrete-event simulation.
\cref{fig:simulation-semantics} shows the core of the simulator.\footnote{
    It is easy to give a formal operational semantics for the language. The operational semantics maintains timestamped requests in the system and updates the state based on a global timer. Instead, we provide the code to show the simplicity of implementing the semantics: the core is about a 100 lines of Python but already provides an effective simulation model for real systems!
}
We treat the discrete-event simulation as the ground truth when comparing the predictions of the CTMC models.

\begin{figure}[t]
\begin{minipage}[t]{0.48\linewidth}
\begin{lstlisting}[language=Python, basicstyle=\tiny]
class Server:
  ...
  async def enqueue(self, request: Attempt):
    if self.queue.full(request):
      request.future.set_result(f"Dropped")
    else:
      await self.queue.put(request)

  async def worker(self, worker_id):
    while self.running:
      request = await self.queue.get()

      // the simplest processing is to delay,
      // but we can make downstream calls
      await asyncio.sleep(
        self.service_time_dist.sample(request))
      if not request.future.done():
        request.future.set_result(
              f"Success {worker_id}")
      self.queue.task_done()
\end{lstlisting}
\end{minipage}
\begin{minipage}[t]{0.49\linewidth}
\begin{lstlisting}[language=Python, basicstyle=\tiny]
class Client:
  ...
  async def send_request(self):
    request = Request(self.id, reqtype=..., arg=...)
    for attempt in range(1, self.retries + 1):
      resp = asyncio.get_event_loop().create_future() 
      req = Attempt(request, resp) 
      await self.server.enqueue(req)
      try:
        ret = await asyncio.wait_for(
            asyncio.shield(resp), timeout)
        return
      except asyncio.TimeoutError:
        await self.retry_policy()

  async def run(self):
    while ...:
      await asyncio.sleep(self.arrival_dist.sample())
      task = async.create_task(self.send_request())
      ...
\end{lstlisting}
\end{minipage}
\caption{Simulator implementation. The simulator gives an operational semantics to the DSL. We use Python's \emph{asyncio} library. \emph{async} denotes an asynchronous call (a future), \emph{await} waits for an asynchronous call to finish. \emph{sleep} blocks until some time has passed. \emph{tasks} are run on a separate thread and does not block the main thread; \emph{wait\_for} waits for an asynchronous task to finish, \emph{shield} ensures tasks are not cancelled. Internally, the async runtime maintains state in the form of requests, futures, and timers.
}
\label{fig:simulation-semantics}
\end{figure}

\subsection{Continuous-time Markov Chains (CTMCs)}

Our goal is to ``compile'' programs in the DSL as CTMCs,
such that the behavior of the CTMC matches the simulation semantics.
We assume familiarity with the basic theory of CTMCs (see, e.g., \cite{Norris,anderson2012continuous}) but provide a recap of basic definitions.

A \textit{continuous-time Markov chain} (CTMC) is a stochastic process over a discrete state space.
The process makes transitions from state to state, independent of the past.
Upon entering a state, it remains in the state
for an exponentially distributed amount of time before changing its state.
This time is called the \textit{holding time} at the state.

Formally, a CTMC $\M = (S,Q)$ 
consists of a set $S$ of \emph{states} and a generator matrix $Q$. 
The generator matrix satisfies
\[
Q_{ii} =  - \sum_{j \neq i} Q_{ij}.
\]
Intuitively, $Q_{ij}>0$ for $i\neq j$ indicates that a transition from $i$ to $j$ is possible and that the timing of the transition is exponentially distributed with rate $Q_{ij}$. 

The probability distribution of a 
%%(non-explosive) 
CTMC $\M=(S,Q)$ is a continuous function of time that 
evolves according to the forward Chapman-Kolmogorov differential equation:
%\cite{Boyd:1994}
\begin{equation}
	\label{eq:Kolmog}
	\frac{d}{dt} {\pi}(t) = 
	\pi(t) {Q}, \quad {\pi}(0) = \pi_0,
\end{equation}
where $\pi_0\in[0,1]^{|S|}$ denotes the initial distribution mapping states to probabilities. 
The unique solution to the equation is given by
$\pi(t)  = \pi_0 e^{Qt}$,
where $e^{Qt}$ is the matrix exponential function.
%The transition probabilities $P_{ij}(t)$ are the entries of this exponential function $e^{Qt}$.

Consider a CTMC $\M = (S,Q)$ and denote its state at time $t$ by $X(t)$.
Letting $t_n$ denote the time at which the $n^{\text{th}}$ change of state (transition) occurs, we see that
$X_n = X(t_n^+)$, the state right after the $n^{\text{th}}$ transition, defines an underlying discrete-time
Markov chain, called the \emph{embedded Markov chain}. $X_n$ keeps track of
the states visited right after each transition, and moves from state to state according to
the one-step transition probabilities $P_{ij} = \prob(X_{n+1} = j|X_n = i)$. %\textcolor{red}{Should we simply say how $P_{ij}$ and $Q$ are connected?}

\subsection{Basic Models: M/M/1 Queues and No Retries as CTMCs}
 
\vspace*{.25cm}
\begin{wrapfigure}{r}{5.5cm}
%\centering
\vspace*{0cm}
\hspace*{0mm}\includegraphics[scale=0.45]{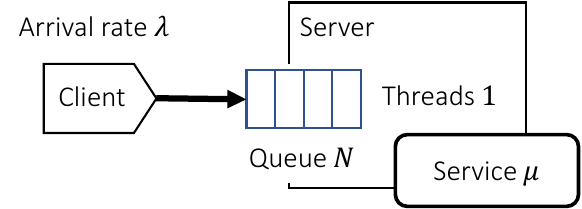}
\caption{An M/M/1 queue in the DSL.}
\vspace*{-0cm}
\label{fig:mm1_diagram}
\end{wrapfigure}
An M/M/1 queue models a simple client-server system with a single First In, First Out (FIFO) queue and a single server.
Clients send requests according to an exponential distribution with rate $\lambda$.
Requests are enqueued at the tail and processed in FIFO order.
Requests have service times that are exponentially distributed at rate $\mu$. The service times are independent from each other and from the arrival process.

%Let $X(t)$ denote the number of requests in the system at time $t$, including requests in the queue and the one being processed by the server, if any.
%That is, $X(t) = 2$ means there is one request being processed and one waiting in the queue.
%When a request arrives, $X(t)$ increases by 1 and when the server finishes processing a request, $X(t)$ decreases by 1.
%
%If $X(t) = 0$, then only a new request can arrive next, so the holding time is given by $H_0 \sim \exp(\lambda)$. %; thus, $a_0 = \lambda$.
%If $X(t) = i$, then the next event is either a service completion or a new arrival.
%The holding time is given by $H_i \sim \min\set{S_r, X}$, where $S_r$ is the remaining service time and $X$ is the time until the next arrival.
%The memoryless property of both service times and arrival rates implies that $S_r \sim \exp(\mu)$ and $X \sim \exp(\lambda)$, and since these are independent,
%we have that $H_i \sim \exp(\lambda + \mu)$, %and $a_i = \lambda + \mu$, 
%for $i\geq 1$.

M/M/1 queues are modeled as CTMCs \cite{Norris}.
The states of the CTMC correspond to the number of requests in the system (either being processed or in the queue).
Arrivals increase the number of requests at rate $\lambda$,
served requests decrease it at rate $\mu$.

The transition probabilities $P_{ij}$ for the embedded discrete-time chain are as follows.
If $X_n = 0$, then we are waiting for an arrival, so $\prob(X_{n+1} = 1\mid X_n = 0) = 1$.
If $X_n = i$ for some $i\geq 1$, then $X_{n+1} = i+1$ with probability $\prob(X < S_r) = \lambda/(\lambda + \mu)$ and $X_{n+1}= i- 1$ with probability $\prob(X > S_r) = \mu/(\lambda + \mu)$,
depending on whether an arrival or departure is the first event to occur next.
Thus, the embedded Markov chain is a simple random walk with ``up'' probability $\lambda/(\lambda + \mu)$ and ``down'' probability $\mu/(\lambda + \mu)$, that is restricted to be non-negative
$P_{0,1} = 1$.

% \paragraph{Bounded queues} 
If the number of requests is bounded to $N$ elements, and a request that arrives when the queue is full is lost, we can
modify the CTMC as follows.
The state space is $\set{0, \ldots, N}$.
The transition function now enforces that $P_{N,N - 1} = \mu/(\lambda+\mu)$ and $P_{N, N} = \lambda/(\lambda + \mu)$, i.e., arrivals when the queue is full are dropped.

\subsection{Modeling a Single Server and Clients with Timeouts and Retries}\label{subsec:single_server_modelling}

We move on to model timeouts and retries.
\emph{Timeout} means that there is a constant $\timeout$ such that, if a request has not been served within $\timeout$, a client can take further action. 
This can be a \emph{retry}, where a new instance of the service is enqueued (without removing the original instance), or a \emph{drop}, where the client decides to drop the request. 
In order to model the effect of retries, we augment the states of the CTMC to track not only the requests in the system (being processed at the server or waiting in the queue),
but also an \emph{orbit} in which requests wait to be retried \cite{Artalejo2008,habibi2023msfmodel}.

We provide the compilation step-by-step,
starting with a simple case and adding 
more features to the model,
%(such as limited retries or exponential backoff), 
without polluting the central intuitions.

\begin{wrapfigure}{r}{5.5cm}
%\centering
\vspace*{-0.5cm}
\hspace*{0mm}\includegraphics[scale=0.45]{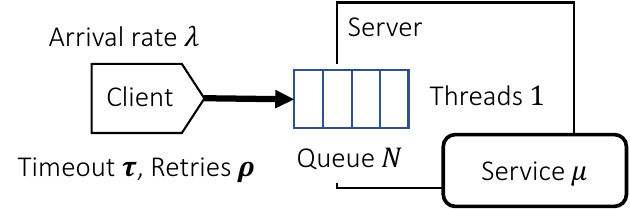}
\caption{A retrial queue in the DSL.}
\vspace*{-0.0cm}
\label{fig:retrial_queue_diagram}
\end{wrapfigure}

\paragraph{One Server, One Thread, Requests with Timeouts and Retries}
%Our basic model follows recent work on metastability \cite{habibi2023msfmodel}.
A state of the CTMC is a pair $(u, v)$, where $u\in \nats$ is the number of requests in the system (being processed in the server or waiting in the queue),
and $v\in \nats$ is the number of requests waiting in the orbit to be retried. 
We assume that there is a fixed timeout $\timeout$ for all requests and requests that time out are retried up to $\retry$ times.

The CTMC models the following processes, all mutually independent:
\begin{itemize}
\item Requests arrive according to a Poisson distribution $\{\textcolor{red}{A(t)}\colon t\geq 0\}$ with rate $\lambda$; 
\item Processing time at the server is a process $\{\textcolor{blue}{C(t)}\colon t\geq 0\}$ which is a Poisson distribution with rate $\mu$; 
\item Failures $\{\textcolor{brown}{F(t)}\colon t\geq 0\}$ correspond to the event that an incoming request will not be served within the timeout horizon and hence added to the orbit: 
if the current state is $(u, v)$, the probability of such an event can be calculated as $r(u)\coloneqq \sum_{i=1}^u \frac{(\mu\timeout)^i}{i!}e^{-\mu\timeout}$;
\item Retries $\{\textcolor{magenta}{R(t)}\colon t\geq 0\}$ which brings the requests waiting in the orbit into the queue: 
if the current state is $(u, v)$, the corresponding transition happens with the (average) rate $ \frac{\retry v}{(\retry +1)\timeout}\coloneqq\alpha v/\timeout$;
\item
        Drops $\{\textcolor{green}{D(t)}\colon t\geq 0\}$ correspond to requests that have been retried $\retry$ times and therefore will be abandoned:
if the current state is $(u, v)$, we consider an exponentially distributed sequence with rate $\frac{v}{(\retry +1)\timeout}\coloneqq (1-\alpha)v/\timeout$. 
\end{itemize}

We note that the processes related to exogeneous arrivals and retries involve adding new requests to the queue.
The rate is dependent on the current number of jobs in the queue:
for a queue size $u$, it depends on the failure rate $r(u)$ whether a~new request will also be added to the orbit or not. 

\begin{comment}
RM PUT BACK IF THERE IS SPACE
\begin{figure}[t]
        \centering
        \includegraphics[scale=.5]{../figs/retrial_queue.png}
        \caption{Illustration of the CTMC modelling queueing system with finite retries and timeouts}
        \label{fig:retrial_queue}
\end{figure}
\end{comment}

% Figure~\ref{fig:retrial_queue} illustrates the CTMC that models the single-server system. 
Formally, the CTMC $\M$ has the state space $\nats^2$.
If the current state is $(u, v)$, the transition rates are defined as follows:
\begin{itemize}
\item both exogenous arrival and (predicted) timeout: $Q((u,v), (u+1, v+1)) = \lambda r(u)$,
\item exogenous arrival but no timeout: $Q((u,v), (u+1, v)) = \lambda (1 - r(u))$,
\item request completion: $Q((u,v), (u-1, v)) = \mu$, if $u \geq 1$,
\item queue a request to be retried, but assume it will fail: $Q((u,v), (u+1, v)) = \alpha v r(u)/\timeout$,
\item queue a request to be retried and assume it will succeed: $Q((u,v), (u+1, v-1)) = \alpha v (1- r(u))/\timeout$,
\item drop a  request from the orbit: $Q((u,v), (u, v-1)) = (1 - \alpha)/\timeout v$.
\end{itemize}
% 
% \begin{equation}\label{eq:gen_matrixsingle_server}
%         Q(s,s')=Q(u,v)(u',v')= \begin{cases}
%                 \lambda r(u)&u'=u+1, v' = v +1\\
%                 \lambda (1-r(u))&u'=u+1, v' = v\\
%                 \mu&u'=u-1, v' = v\\
%                 \alpha vr(u)&u'=u+1, v' = v\\
%                 \alpha v(1-r(u))&u'=u+1, v' = v-1\\
%                 (1-\alpha) v&u'=u, v' = v-1\\
%         \end{cases}     
% \end{equation}

As before, if the queue is bounded by $N$,
we modify the transition rules to ensure 
$u \leq N$ on every transition by disabling
transitions that increment $u$ when $u = N$.
The effect is that when the queue is full, 
new requests are dropped.

\paragraph{Thread Pools with Multiple Threads.}
When a server has multiple threads, we generalize the CTMC models for M/M/1 queues.
Suppose there are $c$ threads.
When $u < c$, some threads are free to serve arriving requests, and the transition
rates are determined by the competition between an arrival
and the completion of the $u$ threads.
%The holding times are state-dependent: when $X(t) = i < c$, then some servers are free and the holding times are given by $H_i \sim \exp(\lambda+i\mu)$, %$a_i = \lambda + i\mu$, 
%representing the competition between an arrival and the
%completion of the $i$ servers.
When $u \geq c$, the holding time is determined by the arrival rate as well as the (independent) competing service times of each thread.
%Since there are $c$ of them, using the same reasoning as before, $H_i \sim \exp(\lambda + c\mu)$.

Thus, focusing only on the number of jobs in the queue, $P_{0,1} = 1$ and for $0\leq i < c$, $P_{i,i+1} = \lambda/(\lambda + i \mu)$, $P_{i, i-1} = i\mu/(\lambda + i\mu)$.
For $i \geq c$, $P_{i, i+1} = \lambda/(\lambda + c\mu)$ and $P_{i, i-1} = c\mu/(\lambda + c\mu)$.
%That is, the random walk has state-dependent probabilities to move up and down.

\paragraph{Multiple Request Types}
In general, a server accepts multiple request types,
each with their own service rates.
Multiple clients can connect to a server, each with their
own arrival rates.

In the CTMC, we model a single queue with all request types.
We average the request arrival rates over all clients
and model that \emph{on average} we will get a request with arrival rate $\lambda_i$ with probability
$\lambda_i / \sum_j \lambda_j$.
Similarly, we average the service rate with
the average of the individual service rates with this arrival distribution. 

\begin{remark}
The CTMC model ``averages out'' 
the simulation semantics. 
The simulator state maintains individual request attempts
and may have multiple outstanding attempts for the same 
request using \texttt{Timeout}s to generate new attempts
while retries remain.
Instead, the CTMC captures the average behavior of the requests: 
\emph{on average}, $r(u)$ fraction of requests time out, \emph{on average}, requests are retried with rate $\alpha v/\timeout$ and dropped with rate $(1-\alpha)v/\timeout$, and so on.
We recover the fidelity of the approximations using
data-driven calibration (\cref{sec:ctmc_calibration}).
\end{remark}

        \subsection{Multiple Servers}\label{App:multi_server_model}
%\paragraph{Closed with multiple servers}
%\textcolor{blue}{Naming (closed) should be modified} 

Finally, we consider multiple servers.
A request at a server can be sequentially forwarded
to downstream servers.
A request is considered served when it is processed
by a leaf server.
We only consider the \emph{synchronous} mode,
where upstream server threads remain blocked
until the request is served---many real-world 
request-response systems are implemented on top of
synchronous Remote Procedure Call (RPC) %RPC 
infrastructure.
For notational simplicity, we describe the construction
when the servers are pipelined (\cref{fig:multi-server-sketch}) and there is only one request type.
The ideas carry over to more general acyclic graphs.\footnote{
    In queueing theory, re-entrant queues form cyclic
    graph structures. We have not seen such configurations
    in request-response systems in the cloud context.
}

% In this section, we extend the proposed model of single server systems for modeling the behavior of a serially connected multi-server system (Figure~\ref{fig:multi-server-sketch}). The ultimate goal is to enable application of the metastability analysis method we proposed in Sec.~\ref{sec:ctmc_models} for the multi-server systems. In this setting, every exogenous request is sequentially forwarded to downstream servers, and once it is processed by the leaf server, it is considered \emph{served}. Until a request is served, it remains within the upstream servers it has traversed, without exiting the system. %We assume that the system is comprised of only one branch of serially connected servers, and hence the use of routing probability matrix becomes unnecessary. 

We fix a program with $K$ servers.
Each server $i\in \set{1,\ldots, K}$ 
is attached to a client with arrival rate 
$\lambda_i$, timeout $\timeout_i$, and $\rho_i$ retries.
(Multiple clients are averaged into one.)
Each server has a service rate $\mu_i$ and $c_i$ threads.
Server $i$ forwards the request to $i+1$, and blocks until
the downstream servers have finished processing the request.

We write $\lambda$, $\mu$, and $c$ for the $K$-dimensional vectors
of arrival rates, service rates, and threadpool sizes, respectively.

%For a system comprised of $K\in \nats$ servers, let $\lambda\in \reals^{K}_{\geq 0}$ and $\mu\in\reals^{K}_{\geq 0}$ denote the vectors containing the \emph{exogenous} arrival rates and processing rates, respectively, such that $\lambda_i>0$ and $\mu_i>0$, respectively, denote the exogenous arrival rate and processing rate corresponding to the $i^{th}$ server. Furthermore, let $c\in \nats^K$ denote the vector containing number of threads of servers within the system.

\begin{figure*}[t]
	\centering
	\includegraphics[scale=.37]{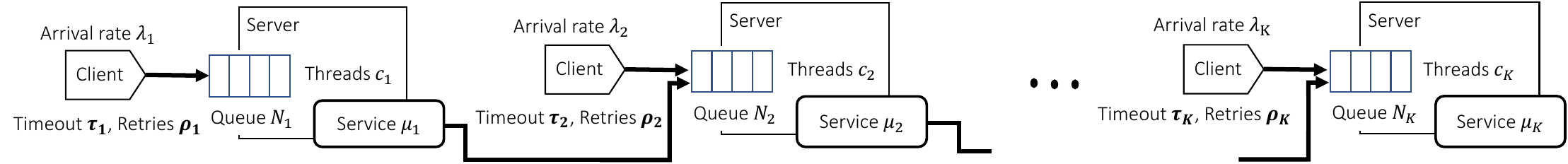}
	\caption{Pipelined servers in the DSL.} %\textcolor{blue}{Fig should be modified}}
	\label{fig:multi-server-sketch}
\end{figure*}

%We describe the \emph{monolithic} model of multi-server systems. Figure~\ref{fig:multi-server-sketch} illustrates the considered sketch. 

The CTMC model of the program has a queue and an orbit
for each server.
We write $S_i\coloneqq \set{(u_i,v_i)\mid u_i\geq 1,  v_i\geq 1}$, where $u_i$ and $v_i$ denote the number of requests and orbit of the $i^{th}$ server, respectively.
The overall state space is $S \coloneqq \prod_{i} S_i$. We write $s_{-i}$ for the components of the state $s\in S$ without $(u_i, v_i)$.
We overload the notation and refer to the function that projects states $s\in S$ into the corresponding queue size and orbit size of each server using the notations $u_i\colon S\rightarrow \nats$, $v_i\colon S\rightarrow \nats$.
We also define the functions
$u\colon s\mapsto (u_1(s), \dots, u_K(s))$ and $v\colon s\mapsto (v_1(s), \dots, v_K(s))$.
%RM: unused notation:
%For $i\in \set{1,\ldots, K}$, we write $S_{-i}\coloneqq \Pi_{j \in\set{1,2,\dots,K}\setminus \set{i}}S_j$ for the 
%states except for $S_i$.

%For the $i^{th}$ server, let the corresponding \emph{local} state space be $S_i\coloneqq \set{(u_i,v_i)\mid u_i\geq 1,  v_i\geq 1}$, where $u_i$ and $v_i$ denote the number of requests in the queue and orbit of the $i^{th}$ server, respectively. The product space is defined as $S=\Pi_{i=1}^{K}S_i$. For $1\leq i \leq K$, we denote by $S_{-i}\coloneqq \Pi_{j \in\set{1,2,\dots,K}\setminus \set{i}}S_j$. 

Modeling transitions requires some thought.
The key issue is that the arrival rates of downstream
servers are affected by the service rates of upstream
servers and the service rates of upstream servers
also depend on the service rates of downstream servers.
Thus, we define \emph{effective} service and arrival
rates that summarize the dependencies.
Since the graph is acyclic, we can compute
the effective rates by a linear pass.

The effective processing rate for the $i^{th}$ server is defined as the minimum between the service rate of a server and the effective service rate of its downstream server:
\begin{equation}\label{eq:closed_multi_server_effective_mu}
    \bar\mu_i(u) \coloneqq\begin{cases} 
    \min(\min(c_i,u_i)\times\mu_i,\bar\mu_{i+1}(u)) & \text{$i<K$}\\ 
    \min(c_i,u_i)\times\mu_i & i=K.
    \end{cases}
\end{equation}
%Essentially, we replace $\mu_i$ with of $\bar\mu_i(u)$ for every server $1\leq i\leq K$. 
%
Similarly, the effective arrival rate at the $i^{th}$ server is defined as
\begin{equation}
\label{eq:lambda_def_multi_server}
	\bar\lambda_i(u) \coloneqq
	\begin{cases}
    \lambda_i & i=1\\
	\lambda_i + \min(\bar\lambda_{i-1}(u), \bar\mu_{i-1}(u)) & i>1.
    \end{cases}
\end{equation}
\begin{comment}
We could also use the following approximation instead of Eq.~\eqref{eq:lambda_def_multi_server}: \textcolor{blue}{this is used for implementation}
\begin{equation}\label{eq:closed_multi_server_effective_lambda}
    \bar\lambda_i(u) =\begin{cases} \lambda_i & \text{if } i \text{ is located at the root}\\ \lambda_i + \min(\bar\lambda_{i-1}(u),\bar\mu_{i-1}(u)) & \text{otherwise.} 
    \end{cases}
\end{equation}
\end{comment}

In order to compute the failure probability, let $\ell_i(u)$ denote the latency corresponding to the $i^{th}$ server. Failure probability is defined as
\begin{equation}\label{eq:failure_def_multi_server}
	r_i(u) \coloneqq\prob(\ell_i(u)>\timeout_{i}).
\end{equation}
%\todo{corresct the standard deviation formulation!} 
%where $u$ is the vector containing all queue lengths $u_i$, $p_{i,k}$ denotes the overall probability along the branch $\omega_{i,k}$, and $\ell(\omega_{i,k})(u)$ corresponds to the latency for the $i^{th}$ station along the branch $\omega_{i,k}$. 
We can use Chebyshev's inequality to over-approximate the value of $\prob(\ell_i(u)>\timeout_{i})$. Since request processing across servers is independent, we can define the mean and variance of the overall processing time for a request in the $i^{th}$ server as follows:
\begin{align}   
\mathsf{MT}_{i}(u) \coloneqq \sum_{l\geq i}u_l/\bar\mu_l \quad\quad    \mathsf{Var}_{i}(u) \coloneqq {\sum_{l\geq i}(u_l/\bar\mu_l)^2}.
\label{eq:open_mean_std}
\end{align}
Now, using Chebyshev's inequality we have the following:
\begin{equation}\label{eq:chebyshev_ineq}
\prob(\ell_{i}(u)\geq\timeout_{i})\leq1/\zeta_i^2,
\end{equation}
where 
$$\zeta_i(u)=\max\left\{1, \left(\timeout_{i}-\mathsf{MT}_{i}(u)\right)/\sqrt{\mathsf{Var}_{i}(u)}\right\}
% \begin{cases}
% 1&  \timeout_i  \leq \mathsf{MT}_{i}(u) + \sqrt{\mathsf{Var}_{i}(u)}\\
% \left(\timeout_{i}-\mathsf{MT}_{i}(u)\right)/\sqrt{\mathsf{Var}_{i}(u)} & \text{otherwise.}
% %\min\left(1,\frac{\max(0,\timeout_{i}-\mathsf{MT}_{i}(u))}{\sqrt{\mathsf{Var}_{i}(u)}}\right).
% \end{cases}
$$ 
We define $\bar r_i(u)\coloneqq 1/\zeta_i^2$ as an upper bound over the failure probability.
%\textcolor{red}{In equation (6), replace the strict inequalities with non-strict versions.}
\begin{comment}
The second modification we need to make is related to the effect of thread numbers of upstream servers on downstream servers. We notice that number of requests in the main queue of the downstream server cannot be less than number of occupied threads of its upstream's server. Formally, %for a multi-server system with $c\in \nats^N$ denoting the vector containing number of threads, 
we have
\begin{equation}\label{eq:closed_initialization_change}
    u_i\geq \min(c_{i-1}, u_{i-1}),
\end{equation}
for $i\geq 2$. %This \emph{initialize} the number of requests in the child queue to match the number of threads in their parent server.
\end{comment}
 
Now, we are able to characterize the generator matrix $Q$ as follows:
\begin{comment}
\begin{equation}\label{eq:multi_server_Q}
	Q(u,v)(u_i',v_i')=\begin{cases}
		\lambda_i(u) r_i(u)&u_i'=u_i+1, v_i'=v_i+1\\
		\lambda_i(u) (1-r_i(u))&u_i'=u_i+1, v_i'=v_i\\
		\mu_i&u_i'=u_i-1, v_i'=v_i\\
		\alpha v_i r_i(u)&u_i'=u_i+1, v_i'=v_i\\
		\alpha v_i (1-r_i(u))&u_i'=u_i+1, v_i'=v_i-1\\
		(1-\alpha) v_i&u_i'=u_i, v_i'=v_i-1.
	\end{cases}
\end{equation}
\end{comment}
\begin{equation}\label{eq:multi_server_Q}
	Q(s)(s')=\begin{cases}
		\bar\lambda_i(u(s)) \bar r_i(u(s))&u_i(s')=u_i(s)+1, v_i(s')=v_i(s)+1, s_{-i}=s'_{-i}\\
		\bar \lambda_i(u(s)) (1-\bar r_i(u(s)))&u_i(s')=u_i(s)+1, v_i(s')=v_i(s), s_{-i}=s'_{-i}\\
		\bar\mu_i&u_i(s')=u_i(s)-1, v_i(s')=v_i(s), s_{-i}=s'_{-i}\\
		\alpha_i v_i(s) \bar r_i(u(s))&u_i(s')=u_i(s)+1, v_i(s')=v_i(s), s_{-i}=s'_{-i}\\
		\alpha_i v_i(s) (1-\bar r_i(u(s)))&u_i(s')=u_i(s)+1, v_i(s')=v_i(s)-1, s_{-i}=s'_{-i}\\
		(1-\alpha_i) v_i(s)&u_i(s')=u_i(s), v_i(s')=v_i(s)-1, s_{-i}=s'_{-i}\\
        0&\text{otherwise,}
	\end{cases}
\end{equation}
where $1 \leq i \leq K$. %, and $s_{-i}$ and $s'_{-i}$ denote the $(K-1)$-dimensional vectors obtained by removing the $i^{th}$ entry from $s$ and $s'$, respectively. 
We deal with bounded queues as before by disabling transitions that go above the bounds.

\begin{remark}[Finite state CTMCs]
An important observation is that our CTMC model
has a finite state space, so that we can use algorithmic
techniques for finite state CTMCs to analyze programs in
our DSL.
While the queue size is bounded because servers come with
natural bounds on the number of jobs in the queue, 
the orbit size can, in principle, grow without bound
and the CTMC may be transient (diverge to larger and larger states).

However, we have proven that the CTMC model with an unbounded orbit is positive recurrent and ergodic, 
and hence the
existence of stationary distribution is guaranteed. 
This means that every state is visited almost surely, we cannot ``get stuck forever'' in some mode---a full queue will drain almost surely.
%% (\RM{TODO Fix ref} see Section~2 in the supplementary material).
Thus, we are justified in studying the behavior of the finite-state model that imposes an upper bound on the orbit.
%The bound makes the CTMCs finite state and allows us to use algorithmic techniques for finite-state CTMCs. We also observe that restricting the orbit size to a small value can lead to a finite CTMC, 
%which may not exhibit metastable behavior (cf. Figure~\ref{fig:intro}(c) for the impact of orbit length on metastability).
\end{remark}

\section{Data-Driven Calibration of the CTMC}
\label{sec:ctmc_calibration}

Unlike usual programming models, the semantics of the CTMC constructed
from a program does not coincide with the simulation semantics.
This is unavoidable, due to the modeling decisions to 
abstract away simulator state to achieve Markovian dynamics
and tractable algorithmic analysis.
However, we would still like some empirical correspondence
between the model and the simulator, so that predictions
from the model are meaningful.

%The CTMC model discussed in \cref{sec:ctmc_models} is an approximation of the precise operational model of a system. For example, we ``average out'' timeouts and retries so that the average effect is maintained, but the distribution of timeouts and retries may differ significantly from an operational model.
%Further, owing to numerical issues, we approximate timeout computations using probability bounds (see Eq.~\eqref{eq:chebyshev_ineq}). 

In this section, we present a method for \emph{calibrating} the CTMC, using a finite set of trajectories generated by discrete-event simulation (DES) of a program. 
Our modeling and calibration uses the structure of the CTMC
as a prior, but learns parameters of the model that minimize
the deviation from simulation data.
A key advantage of our approach is that it yields a continuous-time model, even when the available data consists solely of non-timed observations--i.e., sequences sampled at fixed intervals.

Let $\theta$ be the vector of
real-valued constants appearing in a program $P$.
Let $\Theta\subset \reals^{|\theta|}$ be a compact feasibility set from which the constants $\theta$ may be chosen. 
We write $\M(P^\theta)=(S, Q^\theta)$ to denote the CTMC
defined in \cref{sec:ctmc_models} for $P$.
We denote by $X^\theta(t)$ for the (random) state of $\M(P^\theta)$ at time $t\in \reals_{\geq 0}$.

%Finally, the DES associated with $\theta$ is denoted by $\simulator^\theta$.

To calibrate the CTMC, we choose a set of initial states $\set{s_0^{(1)}, s_0^{(2)},\dots,s_0^{(Z)}}$. 
For each $1\leq i \leq Z$, we run the
simulator $\simulator(P^\theta)$ $M$ times, to produce $M$ 
simulated trajectories, each of length $L\in \nats$ and sampled regularly with respect to a chosen sampling time $\tick>0$. 
Note that $Z$, $M$, $L$, and $\tick$ are hyperparameters for the calibration.

The simulator state contains detailed information, e.g., the actual sequence of requests in a queue, their id's, and so on.
We instrument the simulator state to capture 
the number of requests in each queue and the number of retries occurring in the system, to match the CTMC state.
While the queue size is ``exact,'' the number of retries
in the system is an approximation to the CTMC's notion of orbit size.
(For one, a retry happens in the simulator after a timeout,
but the CTMC can add an element to the orbit upon arrival.)
We have seen that this difference between the calibrated CTMC and the simulator is negligible.

For every $1\leq i \leq Z$ and $1\leq j \leq M$, we
write $\hat{X}^{\theta}_{i,j}(k\tick)$ for the abstracted
simulator state (only the number of jobs in the queue and the number of retries) at time steps $0\leq k \leq L-1$.
Note that $\hat{X}^{\theta}_{i,j}(0)$ is the abstraction of
$s_0^{(i)}$ for every $1\leq j \leq M$. 

% We store the output trajectories $\dataset\coloneqq \set{\hat X^\theta_{i,j}(kT_s)\mid 1\leq i \leq Z, 1\leq j \leq M, 0\leq k \leq L-1}$. 
Next, for every $1\leq i \leq Z$ and $0\leq k \leq L-1$, we  
compute the empirical average 
\[
y_i^\theta(k\tick)\coloneqq \frac{1}{M}\sum_{j=1}^M \hat{X}^{\theta}_{i,j}(k\tick)
\]
This gives the averaged dynamics of the simulator over $M$ runs.

We would like to ``match'' this average simulator dynamics to average CTMC states at corresponding times.
The corresponding CTMC states are computed as 
\[
y^\theta_i(k\tick)\coloneqq\mathbb{E}(X^\theta(k\tick)\mid X^\theta(0)=s_0^{(i)}),
\]
where the expectation is computed using the
matrix exponential of the generator matrix of the CTMC.

For a program $P^{\theta_0}$, our aim is to find $\theta^\ast\in \Theta$ such that (1) $\theta^\ast$ is close to $\theta_0$, and (2) the average output trajectories of $\M(P^{\theta^\ast})$, i.e., $y^{\theta^\ast}_i(k\tick)|_{k=0}^{L-1}$, match as closely as possible with the trajectories of $\simulator(P^{\theta_0})$, i.e., $\hat{y}^{\theta_0}_i(k\tick)|_{k=0}^{L-1}$, for every $1\leq i \leq Z$. %\textcolor{red}{Why these terms here start from 1 but the sum below starts from zero on $k$?}

Formally, we solve the following optimization problem that minimizes the loss:
\begin{align}\label{eq:optimization_ctmc_cal}
	&\min_{\theta\in \Theta}\gamma_1\|\theta-\theta_0\|_2^2 + \gamma_2\sum_{i=1}^Z\sum_{k=0}^{L-1} \|y_i^\theta(k\tick)- \hat y_i^{\theta_0}(k\tick)\|_2^2,  
\end{align}
where $\gamma_1, \gamma_2 \in \reals_{> 0}$ denote the relative importance of the first and second terms in the objective function above.
It is worth noting that our calibration method produces a continuous-time model despite not requiring holding time information in the data trajectories.

        \section{Algorithmic Analysis for Metastability}
\label{sec:analysis}

Sections~\ref{sec:ctmc_models} and~\ref{sec:ctmc_calibration} give us a way to abstract request-response systems into a model that is amenable to
formal algorithmic analysis.

In this section, we present both qualitative and quantitative analyses to examine whether the 
calibrated CTMC model exhibits metastable behavior
and to predict properties such as recovery time.
Along the way, we provide a formal definition of 
metastability.

\subsection{Qualitative Analysis through Visualization}
% \begin{comment}
% We note that at any arbitrary state $(u,v)\in S$, outgoing transitions correspond to $(u+\Delta u, v+\Delta v)\in S$, where 
% \begin{equation}(\Delta u,\Delta v)=\begin{cases} (-1,0)&\text{with rate }\mu \text{ \textit{(when queue isn't empty)}}\\
% (1, 0) & \text{with rate } \lambda(1-r(u))+\alpha v r(u) \text{ \textit{(when queue isn't full)}}\\
% (1, 1)& \text{with rate } \lambda r(u) \text{ \textit{(when queue isn't full)}}\\
% (1, -1)& \text{with rate } \alpha v (1-r(u)) \text{ \textit{(when queue isn't full)}}\\
% (0, -1)& \text{with rate } (1-\alpha)v.
% \end{cases}
% \end{equation}
% \end{comment}

%The calibrated CTMC model defines a dynamical system over a finite discrete state space $S$, where the \emph{flow} is governed by the entries of the generator matrix $Q$. 
In the context of dynamical systems theory, visual flow analysis is a powerful tool for identifying qualitative behaviors such as stability. Such analysis is typically applicable to low-dimensional systems, generally of order three or less. 
%For higher-dimensional systems, one common approach is to apply model order reduction techniques to derive a lower-dimensional representation, enabling effective visual flow analysis.
%
Unfortunately, the Kolmogorov equations
defining the dynamics are over a very high-dimensional state space (the number of states of the CTMC), and we cannot directly visualize this 
dynamics.
%Our goal is to apply similar visual analysis to calibrated CTMCs arising from server systems, in order to investigate metastability from a qualitative perspective.
%Specifically, for two states $s, s' \in S$, the entry $Q_{ss'} \in \mathbb{R}_{\geq 0}$ denotes the rate at which transitions occur from state $s$ to state $s'$, while the diagonal entry $Q_{ss}$ encodes the total exit rate from state $s$.
%For CTMCs, the evolution of probability distributions is governed by the Kolmogorov forward equations, $\dot{\pi}(t) = \pi(t) Q$, where the trajectory $\pi(t)$ evolves within $[0,1]^{|S|}$. However, in the case of server systems, the underlying state space is typically very large, making flow analysis in the space of probability distributions impractical for visualization. 
In what follows, we introduce an efficient approach for performing flow analysis on CTMC models arising from our DSL.

Let us focus on a single server.
%We note that each state $(u, v) \in S$ consists of two components: $u$, representing the number of requests in the queue, and $v$, representing the number of requests in the orbit. 
The state space is two dimensional and can be interpreted as a two-dimensional grid, with one dimension corresponding to the queue and the other to the orbit. Furthermore, the transitions are \emph{sparse}: a state $(u,v)$ can only reach
its neighbors that differ by at most one in a coordinate.
This suggests a visualization of the \emph{aggregate} dynamics in a two-dimensional plane as follows.

For an arbitrary state $(u,v)\in S$, let us define 
\begin{equation}\label{eq:fq_fo_def}
\begin{bmatrix}
    f_q(u,v)\\f_o(u,v)
\end{bmatrix} \coloneqq \sum_{(u',v')\neq (u,v)}Q((u,v), (u',v'))\begin{bmatrix}
        u'-u\\v'-v
    \end{bmatrix}.
\end{equation}
The two components capture the dynamics in the ``queue dimension'' and the ``orbit dimension,'' respectively.
We now define 
\begin{equation}\label{eq:mag_ang_def}
    \mathcal{A}(u,v) \coloneqq  \sqrt{f_q^2(u,v) + f_o^2(u,v)},\quad \theta(u,v) \coloneqq \arctan(f_o(u,v)/ f_q(u,v)).
\end{equation}
We visualize the dynamics of a CTMC by plotting,
at any selected $(u,v)$, an arrow whose magnitude 
corresponds to $\mathcal{A}(u,v)$ 
and whose orientation corresponds to the angle $\theta(u,v)$.
Since we normalize the magnitude, we also use a color scheme to visually present the magnitude.

The complexity of the procedure depends on the sampling density, but finding the visualization at a single point is
independent of the size of the CTMC. Thus, the visualizations are produced in a matter of milliseconds for large (100's of thousands of states) CTMCs.

For multi-server systems with $K>1$ servers, there are $K$ two-dimensional components.
In this case, we visualize the flows for one server at a time, fixing the state components corresponding to the other servers to fixed values.

The visualization highlights only the dominant flows in a deterministic manner and may obscure the fact that the underlying dynamics are inherently stochastic. Therefore, we complement the visualization with analytical tools, as described in the following subsection.

\subsection{Quantitative Analysis I: Expected Hitting Times}

For a CTMC, we define the \emph{hitting time} for a set $D$ as the first (nonzero) time when the chain visits $D$, starting at some state $X(0)=x$:
\begin{equation}
 \label{eq:hitting_time}
  \tau^x_D = \inf \set{ t > 0 \mid X(t) \in D, X(0) = x}. 
 \end{equation}
One can formulate the computation of
recovery times as \emph{expected hitting times}
in the CTMC from a state corresponding to the full queue and high orbit to a state where the queue is empty or corresponds to the average queue size in the stationary distribution.
The expected hitting time can be calculated
by solving a system of linear equations \cite{Norris}.

However, simply calculating expected hitting times does not capture metastability.
First, expected hitting times increase with the queue and orbit sizes. 
Second, the expected hitting time increases
exponentially for unstable systems.
Thus, just because the hitting
times increase does not mean the system has metastable states.

Instead, we can capture the different time scales by considering
the \emph{relative} expected hitting times.
We can call a CTMC metastable w.r.t.\ a set $D \subset S$ of states if
\begin{equation}\label{eq:metastability_def_ht}
|S|\frac{\sup_{x\not\in D} \mathbb{E} [\tau^x_D]}{\inf_{x\in D} \mathbb{E} [\tau^x_{D\setminus \set{x}}]} << 1,
\end{equation} 
that is, if the expected time scale of traveling between different states in $D$ is much larger than reaching some state in $D$ from outside of $D$.

While one can use this definition, it is not ideal due to the fact that solving the linear equation for expected hitting times becomes numerically unstable around metastable paramterizations.
Since one of our aims is to \emph{predict} hitting times, 
we would like an alternate characterization that allows us
to approximate the predictions in a more numerically stable way.

\subsection{Quantitative Analysis II: Metastability and Eigenvalues}

We now provide a characterization of metastability in CTMCs
and connect the characterization to the spectral properties of the
generator matrix.
Our definition is inspired by the analysis of metastability for \emph{discrete-time} Markov chains~\cite{BEGK2002,MetastabilityBook}, and we extend the definition to the continuous-time setting.

\paragraph{A Characterization of Metastability}
For a state $x \in S$ and a set $D \subset S$, we define the \emph{escape probability} from $x$ to $D$ as 
$\prob(\tau^x_D < \tau^x_x)$.
That is, the escape probability is the probability that, if the
chain starts at $x$, it visits $D$ before it visits $x$ again.

\begin{definition}[Metastability] 
A finite CTMC $\M = (S, Q)$ is \emph{$\rho$-metastable with respect to a set $D \subset S$} if
	\begin{equation}
		\label{eq:meta_stability}
		|S| \frac{\sup _{x \in D} \mathbb{P}\left(\tau^x_{D \backslash \set{x}}<\tau^x_x\right)}{\inf _{y \notin D} \mathbb{P}\left(\tau^y_D<\tau^y_y\right)} \leq \rho \ll 1.
	\end{equation}
\end{definition}

Intuitively, a CTMC is $\rho$-metastable w.r.t.\ $D$ 
if any state in $D$ will visit a different state in $D$ at a time scale that is much larger than the time scale for it to visit itself
or the time scale for any state outside of $D$ to visit some state in $D$.
That is, each state in $D$ acts as an ``attractor'': 
once the state is at $x\in D$, it is more likely that $x$ is revisited
before some other state $y\in D$ is visited (although such visits happen probability one).
Moreover, any state outside of $D$ is attracted to some state in $D$,
again at a time scale faster than a visit between two different states in $D$.

We note that our characterization simply determines the metastable state, without stating whether a state is ``good'' or ``bad''.
In our application, metastable states typically correspond
to ``full queue'' and ``average queue'' states.
Since system performance is bad in the ``full queue'' metastable
states, we consider these states undesired and call them metastable \emph{failures}.

\paragraph{Relating Metastability to Eigenvalues}
Next, we show that our notion of metastability can be predicted by looking at the eigenvalue structure of the generator matrix.
The following theorem is an extension of \cite[Theorem 8.43]{MetastabilityBook} to the case of CTMCs, where we also
need the methods of \cite[Theorem 1.2]{eckhoff2002low}.

\begin{theorem}
\label{thm:metastability-spectral}
Let $\M = (S, Q)$ be an ergodic, finite CTMC.
Let $D$ be a set of metastable points, $|D| = k$.
Define $D_k = D$, and 
\[
D_{\ell-1}:=D_{\ell} \backslash \set{x_{\ell}},\quad x_{\ell}:=\operatorname{argmax}_x\left[\mathbb{P}_x\left(\tau_{D_\ell \backslash x}<\tau_x\right), x \in D_{\ell}\right],\quad \forall\ell\in\{2,3,\ldots,k\}
\]
Then $-Q$ has $k$ eigenvalues 
$0 = \lambda_1 < \lambda_2 < \ldots < \lambda_k$,
and
\begin{equation}
\lambda_{\ell}=
\frac{1}{\mathbb{E}\left[\tau_{D^{x_{\ell}}_{\ell-1}}\right]}[1+O(\frac{\rho}{|S|})], \quad \ell \in \set{2, \ldots, k}.
\end{equation}
\end{theorem}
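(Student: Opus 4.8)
The plan is to transport the discrete-time correspondence between small eigenvalues and mean hitting times of \cite[Theorem 8.43]{MetastabilityBook} to the continuous-time generator $Q$, and to sharpen it to the stated multiplicative precision using the two-sided eigenvalue estimates in the spirit of \cite[Theorem 1.2]{eckhoff2002low}. Writing $\pi$ for the unique stationary distribution of the ergodic chain, the organizing principle is the chain of near-identities, valid up to a relative error of order $\rho/|S|$,
\begin{equation*}
\lambda_\ell \;\approx\; q_{x_\ell}\,\mathbb{P}_{x_\ell}\!\left(\tau_{D_{\ell-1}} < \tau_{x_\ell}\right) \;\approx\; \frac{1}{\mathbb{E}\!\left[\tau^{x_\ell}_{D_{\ell-1}}\right]},
\end{equation*}
where $q_{x} = -Q(x,x)$ is the exit rate. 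The second approximation is a renewal identity: started from $x_\ell$, the chain makes a geometric number (mean $1/\mathbb{P}_{x_\ell}(\tau_{D_{\ell-1}}<\tau_{x_\ell})$) of short excursions returning to $x_\ell$ before reaching $D_{\ell-1}$, and in the metastable regime these excursions are fast, so the mean hitting time factorizes as the inverse escape rate.

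\textbf{From escape probabilities to eigenvalues.}
First I would establish that $-Q$ has exactly $k$ low-lying eigenvalues, well separated from the remaining $|S|-k$ eigenvalues (which correspond to fast local relaxation), so that the spectrum $0 = \lambda_1 < \cdots < \lambda_k$ is well defined; ergodicity forces $\lambda_1 = 0$ with right eigenvector $\mathbf{1}$ and left eigenvector $\pi$. For the remaining $k-1$ eigenvalues I would build approximate eigenfunctions from the equilibrium potentials $h_\ell$ of the pair $(\{x_\ell\}, D_{\ell-1})$, i.e.\ the harmonic functions solving $(Q h_\ell)(x) = 0$ off $\{x_\ell\}\cup D_{\ell-1}$ with $h_\ell(x_\ell) = 1$ and $h_\ell|_{D_{\ell-1}} = 0$. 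Evaluating the Rayleigh quotient $-\langle h_\ell, Q h_\ell\rangle_\pi / \langle h_\ell, h_\ell\rangle_\pi$ identifies it, via Green-function and capacity identities, with $q_{x_\ell}\mathbb{P}_{x_\ell}(\tau_{D_{\ell-1}} < \tau_{x_\ell})$ up to nested-structure corrections; the recursive choice $x_\ell = \operatorname{argmax}_x \mathbb{P}_x(\tau_{D_\ell\setminus x} < \tau_x)$ guarantees these potentials are mutually almost orthogonal and ordered, producing the strict ordering of the $\lambda_\ell$.

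\textbf{Two-sided bounds.}
To pin down each $\lambda_\ell$ to within $[1 + O(\rho/|S|)]$ I would prove matching upper and lower bounds. The upper bound follows from the min-max principle applied on the subspace spanned by $h_2,\dots,h_\ell$ (after orthogonalizing against the lower modes), giving $\lambda_\ell \le (1+O(\rho/|S|))/\mathbb{E}[\tau^{x_\ell}_{D_{\ell-1}}]$. The lower bound is the harder half and is where I would import the method of \cite{eckhoff2002low}: show that the span of the constructed potentials is approximately invariant under $Q$, control the projection of the true low-lying eigenvectors onto this span, and conclude that no genuine eigenvalue can sit strictly below the Rayleigh estimates by more than the allowed error. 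Combining both bounds with the renewal identity converts the capacity estimates into the inverse-mean-hitting-time form stated in the theorem.

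\textbf{Main obstacle.}
The principal difficulty is that the queueing CTMCs of \cref{sec:ctmc_models} are \emph{not reversible} (the diagonal transitions $(u,v)\to(u+1,v\pm1)$ together with $(u,v)\to(u-1,v)$ create directed cycles), whereas \cite{MetastabilityBook,eckhoff2002low} are developed for reversible dynamics, where the Dirichlet form is symmetric and capacities admit a variational characterization. I therefore expect the hard part to be re-deriving the capacity–hitting-time–eigenvalue dictionary without self-adjointness: one must work with the possibly complex spectrum of the non-symmetric $-Q$, replace symmetric Rayleigh quotients by bilinear forms using both left and right equilibrium potentials, and verify that the metastability hypothesis still forces the low-lying eigenvalues to be real and simple. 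A secondary obstacle is bookkeeping: tracking the relative error uniformly through the nested family $D_k \supset \cdots \supset D_1$ so that it accumulates only to $O(\rho/|S|)$ rather than degrading with $\ell$.
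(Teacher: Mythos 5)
Your proposal takes a genuinely different route from the paper. The paper's proof (in the appendix) contains no spectral analysis in continuous time at all: it is a reduction to the discrete-time theorem via the embedded Markov chain. The key observation is pathwise: writing a realization of the CTMC as a sequence of states and holding times, the hitting time of any set $A$ satisfies $\tau_A=\sum_{n=0}^{\bar\tau_A-1}H_n$ with $H_n>0$, so for any two sets the \emph{ordering} of hitting times is the same for the CTMC and its embedded DTMC. Hence every escape probability $\mathbb{P}_x(\tau_{D\setminus\{x\}}<\tau_x)$, and with it $\rho$-metastability and the nested construction of the $x_\ell$, is identical for the two chains; the theorem then follows by applying Theorem 8.43 of \cite{MetastabilityBook} to the embedded DTMC and adapting the conclusion to $Q$ (the paper notes that rescaling $Q$ rescales both sides of the eigenvalue identity). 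What the reduction buys is brevity: no capacities, Rayleigh quotients, or approximate invariant subspaces are needed. What your direct continuous-time derivation buys is that it avoids the translation between step-counted hitting times of the embedded chain and real-time hitting times of the CTMC --- a translation the paper's phrase ``adapting it to the matrix $Q$'' glosses over, since $Q=\Lambda(P-I)$ with $\Lambda$ the diagonal matrix of exit rates, and the spectra of $Q$ and $P-I$ are not related by a global rescaling when exit rates vary across states. You also confront non-reversibility of the queueing CTMCs head-on, which the paper defers entirely to the citation of \cite{eckhoff2002low}; flagging that as the hard part is a correct instinct, and it is a gap your route would have to fill that the paper's route quietly inherits as well.

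However, one intermediate identity in your sketch is wrong as stated. The Rayleigh quotient of the equilibrium potential $h_\ell$ is $-\langle h_\ell,Qh_\ell\rangle_\pi/\langle h_\ell,h_\ell\rangle_\pi=\mathrm{cap}(x_\ell,D_{\ell-1})/\|h_\ell\|_\pi^2$, whose numerator equals $\pi(x_\ell)\,q_{x_\ell}\,\mathbb{P}_{x_\ell}(\tau_{D_{\ell-1}}<\tau_{x_\ell})$ and whose denominator is approximately the stationary mass $\pi(A_\ell)$ of the entire valley of $x_\ell$, not $\pi(x_\ell)$. So the first link in your displayed chain, $\lambda_\ell\approx q_{x_\ell}\mathbb{P}_{x_\ell}(\tau_{D_{\ell-1}}<\tau_{x_\ell})$, is off by the factor $\pi(A_\ell)/\pi(x_\ell)$, which is enormous precisely in the models at hand, where valleys are large regions of the $(u,v)$ grid rather than single states. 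The same slip appears in the renewal heuristic: by Kac's formula the mean excursion length from $x_\ell$, conditioned to stay in its valley, is approximately $\pi(A_\ell)/(\pi(x_\ell)q_{x_\ell})$ rather than $1/q_{x_\ell}$, giving $\mathbb{E}[\tau^{x_\ell}_{D_{\ell-1}}]\approx\pi(A_\ell)/\mathrm{cap}(x_\ell,D_{\ell-1})$. The two errors cancel in your final formula $\lambda_\ell\approx1/\mathbb{E}[\tau^{x_\ell}_{D_{\ell-1}}]$, which is the theorem, and the machinery you actually propose (capacities, min--max, Eckhoff-style lower bounds) produces the valley-mass factors automatically if executed correctly; but as written, the claimed identification of the Rayleigh quotient with $q_{x_\ell}\mathbb{P}_{x_\ell}(\tau_{D_{\ell-1}}<\tau_{x_\ell})$ does not hold to relative error $O(\rho/|S|)$ and should be corrected before the two-sided bounds are assembled.
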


\Cref{thm:metastability-spectral} states that $k$ metastable states are characterized by a cluster of $k$ eigenvalues near $0$.
As an example, consider the CTMCs from \cref{sec:overview}:
the metastable version with arrival rate 9.5 RPS and the throttled
version with arrival rate 8 RPS.
\cref{fig:spectral_gap_motiv_ex} shows the two dominant 
eigenvalues of the two CTMCs.
The largest eigenvalue is $0$ for both CTMCs, since \emph{every} CTMC has 0 as the dominant eigenvalue.
However, for the first CTMC, the second largest eigenvalue is close to 0, whereas for the second CTMC, it is away from 0.

\begin{wrapfigure}{R}{6.5cm}
\vspace*{-1em}
%\begin{minipage}{0.48\linewidth}
\includegraphics[scale=.3]{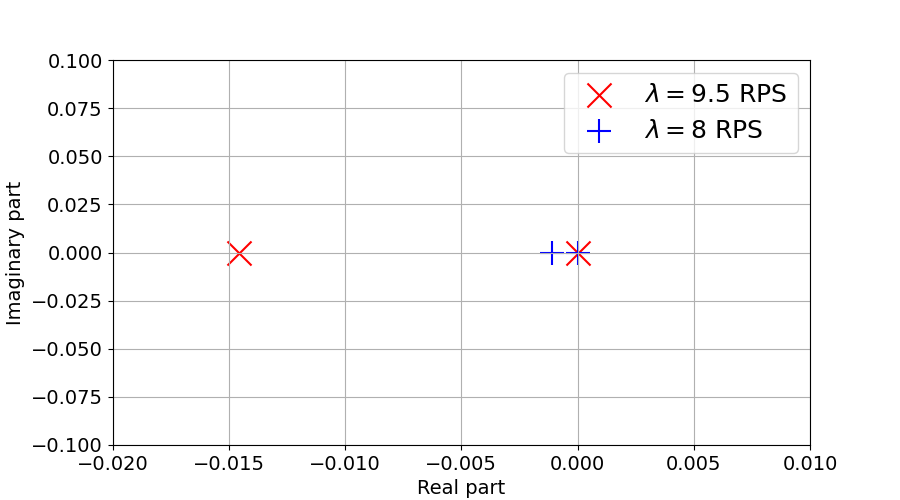}
%\caption*{(a)}
%\end{minipage}
\caption{Comparison of the two dominant eigenvalues of two CTMCs
from \cref{sec:overview}.%\textcolor{red}{polish the figure, make the cross and plus larger, etc.}
}
\label{fig:spectral_gap_motiv_ex}
\end{wrapfigure}

A second benefit of the spectral characterization is that we can estimate the \emph{mixing time} for a CTMC using its eigenvalues.
Informally, the mixing time measures how long the chain takes to reach its stationary distribution
and can be used as an approximation for hitting times, since
the latter can be numerically unstable.
Formally, for every $\epsilon > 0$, we can show
\[
\hspace*{-6cm}
\tau_{\mathrm{mix}}(\epsilon) \geq \log(1/2\epsilon)/|\mathrm{Re}(\lambda_{2})|
\]
where the mixing time $\tau_{\mathrm{mix}}(\epsilon)$ gives the time 
taken to reach within $\epsilon$ total variation distance of the stationary distribution and $\lambda_2$ is the nonzero eigenvalue with smallest real part.

\paragraph{A Remark About Implementation}
    Computing both expected hitting times and mixing times requires tools from linear algebra---such as solving linear systems of equations or computing the eigenvalues of the CTMC's generator matrix. In CTMC models for server systems, transitions occur only between neighboring states, resulting in a sparse structure. Consequently, the generator matrix is sparse, with the majority of its entries equal to zero. This sparsity can be exploited to achieve significant computational speed-ups by leveraging techniques from black-box linear algebra tailored for sparse matrices.

	\section{Experimental Results}\label{sec:experiments}

We now describe our experiences in characterizing metastable configurations for different system parameters and the effect
of recovery policies on recovery time after a metastable failure.
We answer the following research questions.

\begin{enumerate}
\item[RQ1] %\textcolor{blue}{
Is the CTMC model in Section~\ref{sec:ctmc_models} faithful to the behavior of discrete-event simulations? If not, can the calibration method in Section~\ref{sec:ctmc_calibration} compensate for the inaccuracies?%}

\item[RQ2] can we use our analysis to understand how system configurations 
affect metastable behaviors in a request-response system?
Do the quantitative estimations reinforce the qualitative visualizations?

\item[RQ3] how does metastability analysis aid in designing a \emph{recovery policy} that enables fast recovery after a temporary fault scenario?

%\item[RQ3] is the CTMC model an accurate model of the underlying queueing system, in the presence of retries and timeouts?
\end{enumerate}

% We start by empirical evaluation of our proposed model in Section~\ref{subsec:fidelity}, followed by assessing the effect of parameterization 
% over system's metastability in Section~\ref{subsec:parametrization_analysis}. 
% Finally, in Section~\ref{subsec:recovery}, we demonstrate how assessing a given parametrization helps in selecting an appropriate recovery configuration for faster recovery. 
% In \cref{app:experiments_extended}, we present additional empirical results analyzing the impact of parametrization on system latency and demonstrating 
% the application of our method for metastability analysis in multi-server systems.  

\subsection{CTMC Calibration %\textcolor{blue}{(WIP)}
}
\begin{figure}[!htbp]
\centering
\begin{minipage}{1\textwidth}
\centering
\includegraphics[scale = .4]{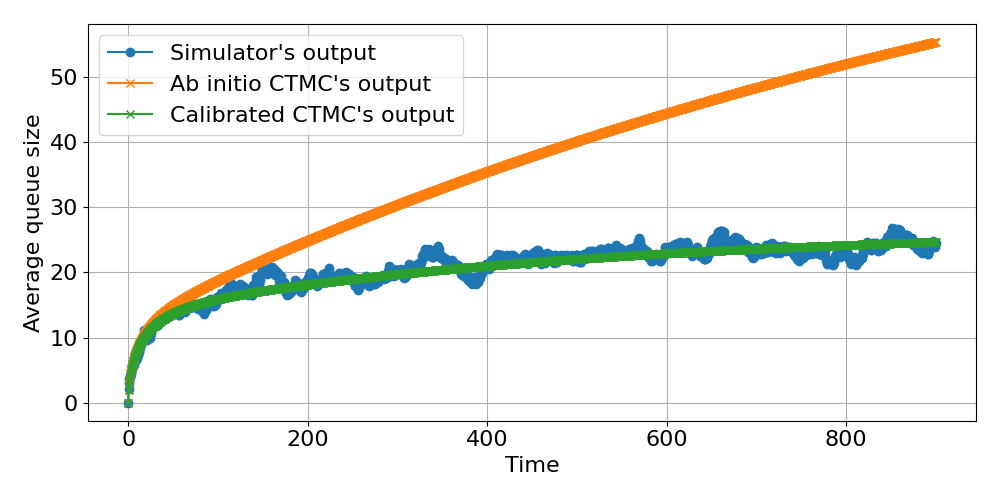}
\caption*{(a)}
\end{minipage}

\begin{minipage}{1\textwidth}
\centering
\includegraphics[scale = .4]{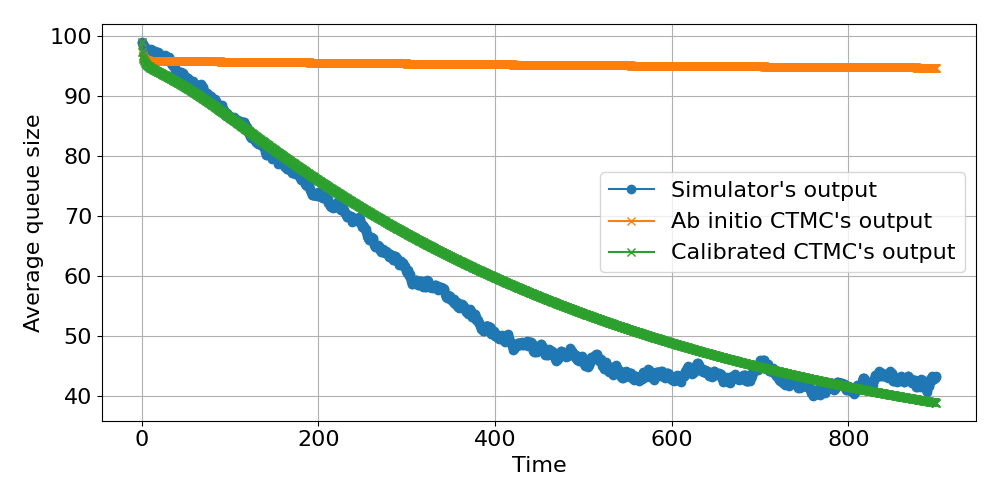}
\caption*{(b)}
\end{minipage}
\caption{Comparison of trajectories representing the average number of requests in the system, as produced by the ab initio and calibrated CTMC and the discrete-event based simulator, under two initial queue and orbit conditions: (a) empty and (b) full.
%\textcolor{red}{Your time axis is wrong. Since you are plotting this in continuous-time, the time should go up to $L=900$. You now have it for discrete sampled time. Just re-scale the horizontal axis if you have the data or it is easy to generate the figure again.}
}
\label{fig:calibration_motiv_ex}
\end{figure}
In this section, we examine how the CTMC calibration method proposed in Section~\ref{sec:ctmc_calibration} influences the accuracy of the resulting CTMC model. To this end, we consider a setting similar to the motivating example: the system parameters are given by $\lambda_0 = 9.5$, $\mu_0 = 10$, $\timeout_0 = 9$, and $\retry_0 = 3$, with queue and orbit lengths set to $100$ and $20$, respectively. Our focus is on calibrating the model with respect to the arrival rate and the timeout value. Specifically, we define the nominal parameter vector as $\theta_0 = \begin{bmatrix}\lambda_0&\timeout_0\end{bmatrix} = \begin{bmatrix}9.5& 9\end{bmatrix}$. For the feasibility set, we fix $\mu$ and $\retry$ to their nominal values and define the search space as $\Theta = (9, 10) \times (7, 11)$. %\textcolor{red}{%The second interval does not include the nominal value.  Why do you then $\tau_0$ and $\rho_0$? It raises expectations..}

To solve the optimization problem described in Eq.~\eqref{eq:optimization_ctmc_cal}, we use covariance matrix adaptation evolution strategy (CMA-ES), which is an efficient optimization method that belongs to the class of evolutionary algorithms \cite{Hansen:2005}. It is stochastic and derivative-free, and can handle non-linear, non-convex, or discontinuous optimization problems. 

We consider two initializations, $Z = 2$, corresponding to the cases where both the queue and orbit are either empty or full. For each initialization, we generate $M = 100$ simulation runs, each with a duration of $L =1800$ and a sampling interval of $T_s = 0.5$. After 30 iterations of Running CMA-ES, %\textcolor{red}{Is it ES or CMA-ES?} 
the calibrated parameters converge to $\theta^\ast = \begin{bmatrix}\lambda^\ast& \timeout^\ast\end{bmatrix} = \begin{bmatrix}9.43& 10.54\end{bmatrix}$, with a total execution time of 978 seconds. 

\Cref{fig:calibration_motiv_ex} shows subsequent simulations of the program and the two CTMCs, before and after calibration.
While the trajectories generated by $\M^{\theta_0}$ deviate from those produced by the discrete-event based simulator, the calibrated model $\M^{\theta^\ast}$ produces trajectories that closely align with them.

We conclude that, although the output of the \emph{ab initio} CTMC model from \cref{sec:ctmc_models} does not always align with that of the discrete-event simulator, the calibration method in \cref{sec:ctmc_calibration} significantly reduces this mismatch (RQ1).

\begin{comment}
\begin{remark}
\RM{Omit this? We can do the new expts later}
The experimental results presented in \cref{subsec:parametrization_analysis,subsec:recovery,subsec:experiments_multi_server} are based on the ab initio CTMC, i.e., the results are obtained using the non-calibrated model. While the lack of calibration may lead to potentially inaccurate predictions for quantities such as hitting times and mixing times, the ab initio CTMC still offers strong exploratory and explanatory power--particularly when analyzing how variations in system parameters affect the appearance of metastable behaviors.
\end{remark}
\end{comment}

\begin{figure}[!t]
	\centering
    \begin{minipage}[]{0.48\textwidth}   \includegraphics[width=1\textwidth]{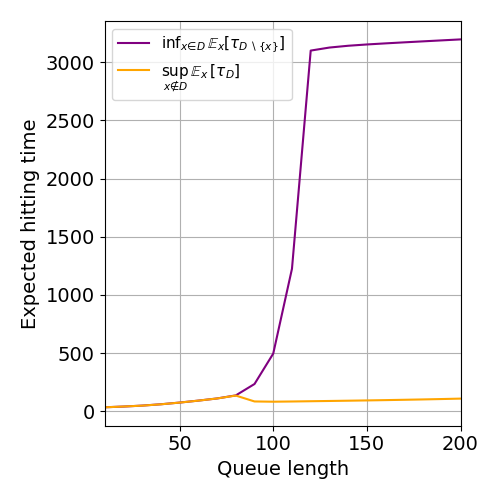}
    \caption*{(a)}
    \end{minipage}
    \begin{minipage}[]{0.48\textwidth}   \includegraphics[width=1\textwidth]{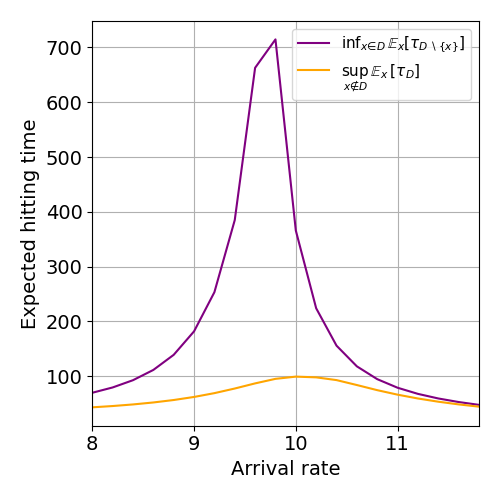}
    \caption*{(b)}
    \end{minipage}\\
    \begin{minipage}[]{0.48\textwidth}   \includegraphics[width=1\textwidth]{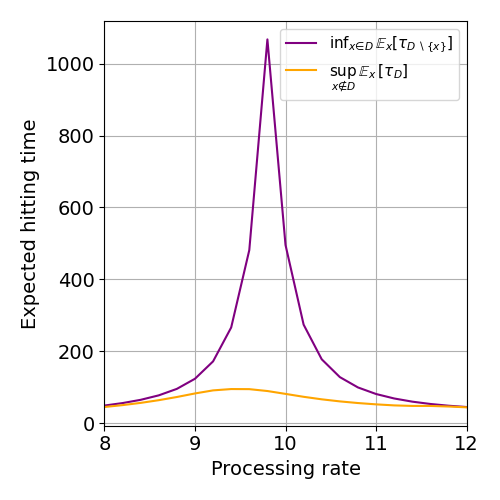}
    \caption*{(c)}
    \end{minipage}
    \begin{minipage}[]{0.48\textwidth}   \includegraphics[width=1\textwidth]{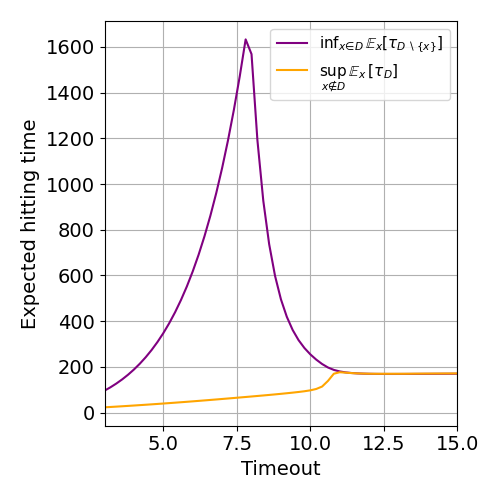}
    \caption*{(d)}
    \end{minipage}
	\caption{Illustration of variations in expected hitting times used to verify the system's metastability with respect to the set \( D = \{\low, \high\} \), where \( \low \) and \( \high \) represent the CTMC states corresponding to, respectively, the empty and full queue and orbit. The variations are shown with respect to (a) queue length, (b) arrival rate, (c) processing rate, and (d) timeout. %\textcolor{red}{Make the legend with larger font}
    }
	\label{fig:parametrization_ht_s_u}
\end{figure}
\subsection{Effect of System Parameters on Metastability}\label{subsec:parametrization_analysis}

First, we evaluate how metastability emerges as system parameters change on the running example.
We focus on this example, but our analysis techniques and run times are similar for other models.

We set the parameters of the model to the following \emph{nominal} values, as in Sections~\ref{sec:intro} and \ref{sec:overview}: 
the arrival rate $\lambda_0=9.5$ RPS, 
processing rate $\mu_0=10$ RPS, 
maximum number of retries is $3$, 
timeout $\timeout_0 = 9$s, 
queue length $100$,
and orbit length $20$. 
\begin{comment}
%
\begin{wrapfigure}{r}{3cm}
\caption{Visualization}

\label{fig:viz-nominal}
\end{wrapfigure}

\begin{figure}[t]
\begin{minipage}{0.35\linewidth}	\includegraphics[scale=.22]{figs/new/parametrization_vis_qlen.png}
    \caption*{(a)}
\end{minipage}
\begin{minipage}{0.31\linewidth}
 \hspace{0.5cm}
\includegraphics[scale=0.23]{figs/new/parametrization_hit_time_qlen.png}
\caption*{(b)}
\end{minipage}
\begin{minipage}{0.31\linewidth}
\hspace{-0.1cm}
\includegraphics[scale = .23]{figs/new/parametrization_mix_time_qlen.png}
\caption*{(c)}
\end{minipage}~
%
\caption{Comparison among different qualitative and quantitative metastability indices: (a) qualitative visualization of the dynamics, (b) expected return time, and (c) mixing time.
}
\label{fig:experiments_results}
\end{figure}
%
\end{comment}
We set \( D = \{ \low, \high \} \), where \( \low \) corresponds to the CTMC state in which both the queue and orbit are empty, and \( \high \) corresponds to the state where both the queue and orbit are full. To measure the effect of parametrization on the system's metastability index, \cref{fig:parametrization_ht_s_u} illustrates how the numerator and denominator in \cref{eq:metastability_def_ht} vary with respect to queue length, arrival rate, processing rate, and timeout values. Intuitively, a parameterization corresponds to metastability if $\sup_{x \notin D} \mathbb{E}_x [\tau_D]$ is small, indicating that states in $S \setminus D$ quickly reach $D$, and $\inf_{x \in D} \mathbb{E}_x [\tau_{D \setminus \{x\}}]$ is large, indicating that the expected travel time between $\low$ and $\high$ (in both directions) is small. 

\cref{fig:parametrization_ht_s_u} shows that reducing the processing rate and timeouts, or increasing the arrival rate and queue length, leads to the emergence of metastable behaviors. 
%While we do not show the visualizations, 
These results are in accordance with visualizations.
%\textcolor{red}{The last sentence is contradictory.} 
Notice that increasing the queue length beyond a certain point causes the system to remain metastable as the queue length increases, in contrast to the effect of the other parameters. 
Specifically, \cref{fig:parametrization_ht_s_u}(a) shows that for queue lengths greater than 90, the system remains metastable,
as expected from the visualization.
In comparison, for other parameters, from \cref{fig:parametrization_ht_s_u}(b-d), metastability occurs when $\lambda \in (9, 10.5)$, $\mu \in (9.5, 10.5)$, and $\timeout \in (5, 10)$. This is an important observation: increasing the queue length, while increases %\textcolor{red}{increasing or increases?} 
the expected hitting time between $\low$ and $\high$, does not make \emph{either} of them a universal attractor for the entire state space. In contrast, changing the other parameters, i.e., arrival and processing rates and timeout, alters the relative attraction between the two, making one of $\low$ and $\high$ the universal attractor (based on stability/instability).
This is key point: metastability is different from stable ($\low$ is the only attractor) and unstable ($\high$ is the only attractor) behaviors!

We conclude that our formal notion of metastability captures observed metastable behaviors in systems and the CTMC
model helps us navigate the space of parameters (RQ2).

\subsection{Recovery}\label{subsec:recovery}

Next, we study the effect of metastable modes on the \emph{recovery time} of the system.
Specifically, we consider the expected time reach a queue size
less than 10\% of the maximum queue length
when starting from the state $\high$. %\textcolor{red}{Previous sentence does not read well.}
%(e.g., $u$ greater than 90\% of the queue length). 
In principle, configurations that lead to a large recovery time should be avoided. 
In practice,  
one may prefer to allow such configurations to optimize performance, 
while accelerating recovery from $\high$ by selecting an appropriate \emph{recovery policy}.
%The scenario consists of three steps: 
%(1) normal operation: for $T_{\mathsf{normal}} = 200$ time units, the system is run under the nominal parametrization given above; 
%(2) for an interval $T_{\mathsf{trigger}} = 200$ time units, we increase the arrival rate to $\lambda_f=20$, modeling a load spike; 
%(3) the fault is eliminated at time $T_{\mathsf{normal}} + T_{\mathsf{trigger}} = 400$, and we run the system with the new rates $\lambda_r$ and $\mu_r$.
%
A recovery policy adjusts the rates with the goal of returning the system to $\low$. A default recovery policy does
nothing; but a recovery policy can throttle the arrival rate or increase the processing rate.
%(In practice, this corresponds to more throttling or adding more processors.) 

%\textcolor{red}{I think in the following we refer first to Figure 15 and then figure 14. The figures should be presented in the order they are referred to in the main text.}

Figure~\ref{fig:return_all} shows the effect of changing the arrival rate $1\leq \lambda_r \leq 10$, the processing rate $8\leq \mu_r \leq 12$, and the timeout $5\leq \timeout_r \leq 15$ when the queue is full. 
We plot the expected hitting time and include a range around it that has the standard deviation of the hitting time as its radius. %\textcolor{red}{Is it average hitting time or expected hitting time?} 
It can be observed that the recovery time increases with a lower processing rate, shorter timeouts, and a higher arrival rate. %\textcolor{red}{Check this sentence. Does it match with what we see in the figure?}

\begin{figure}[t]
    \centering
    % First row
    \begin{minipage}{0.32\textwidth}
        \centering
        \includegraphics[width=\textwidth]{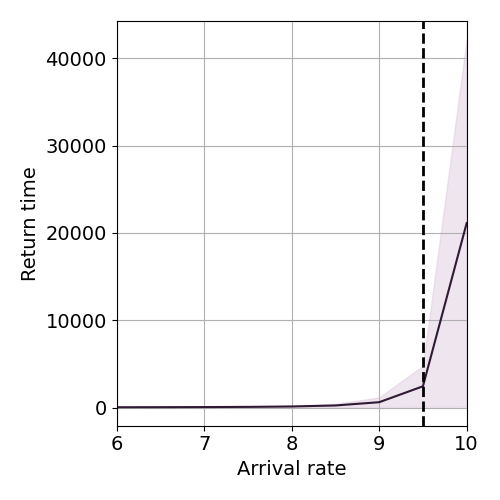} 
        %\caption{(a) Return time vs. arrival rate}
        \label{fig:return_a}
    \end{minipage}
    \begin{minipage}{0.32\textwidth}
        \centering
        \includegraphics[width=\textwidth]{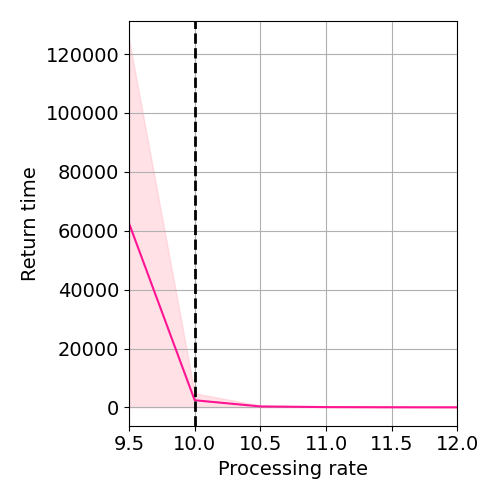} 
        %\caption{(b) Return time vs. processing rate}
        \label{fig:return_b}
    \end{minipage}
    \begin{minipage}{0.32\textwidth}
        \centering
        \includegraphics[width=\textwidth]{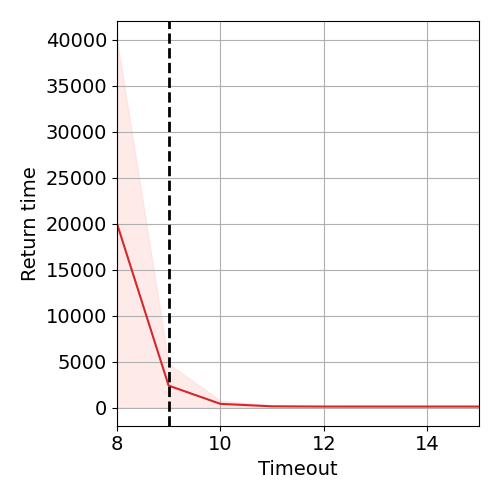} 
        %\caption{(c) Return time vs. timeout}
        \label{fig:return_c}
    \end{minipage}
    \caption{Effect of setting parameters during recovery on the system recovery time. %\textcolor{red}{Check and add point at the end of all captions. Unify in any case.} 
    }    \label{fig:return_all}
\end{figure}

%\begin{wrapfigure}{r}{3.5cm}
%\vspace*{-1cm}
\begin{figure}[!htbp]
	\centering
		% \begin{minipage}[]{0.4\textwidth}   
         \includegraphics[scale=0.4]{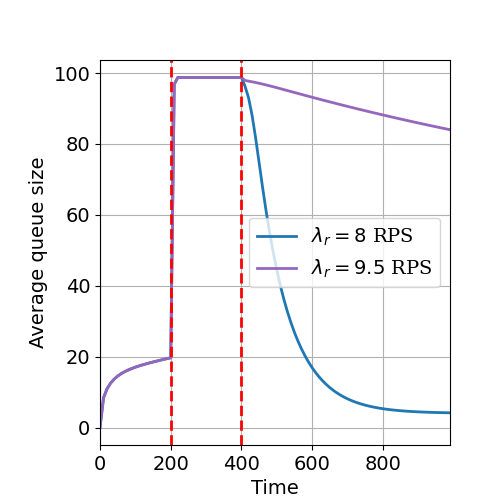}
	% \end{minipage}
	\caption{Two recovery policies. The red lines indicate a load spike. Recovery starts at time 400s with default rate 9.5 RPS and throttled rate 8 RPS.}
%\vspace*{-1cm}
	\label{fig:good_vs_bad_policy}
\end{figure}
%\end{wrapfigure}
%

Figure~\ref{fig:good_vs_bad_policy} shows the effect of two concrete recovery policies, generated from the CTMC
by solving the Kolmogorov equations.
The default policy has a long recovery time since $\lambda_r=9.5$ corresponds to a metastable configuration.
Throttling the arrival rate to $\lambda_r=8$, which corresponds to a stable configuration, 
causes rapid recovery.

We conclude that the CTMC-based exploration helps us analyze the effect of recovery policies by predicting average recovery times (RQ3).

\begin{figure*}[t]
	\centering
	\includegraphics[scale=.37]{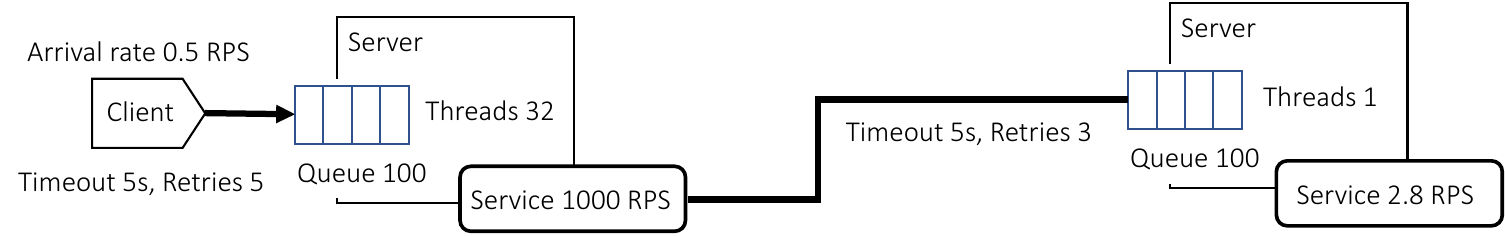}
	\caption{Multi-server system considered in \cref{subsec:experiments_multi_server}. %\RM{remove second client}
    }
	\label{fig:two-server-sketch}
\end{figure*}
\subsection{Metastability in Multi-Server Systems}\label{subsec:experiments_multi_server}

To show that our analysis scales to more complex systems, we now move to a multi-server example.  
We consider an example, inspired by an industrial service, with two servers connected serially (\cref{fig:two-server-sketch}). 
The first server has $32$ threads and receives requests at rate $0.5$ RPS.
Each thread, after some quick processing (rate $1000$ RPS), forwards requests to the second server and 
waits until the second server is done.
The second server has processing rate $2.8$ RPS. 
We set the queue and orbit lengths, respectively, to be $100$ and $20$ for both servers. 
We set timeout to be $5$s for both servers, and 
the maximum number of retries to $5$ and  $3$, respectively.
The effective service rate of the first server is determined by the processing rate of the second.

\cref{fig:viz_two_server} presents a visualization of the stochastic dynamics over the state space of the second server, assuming that the first server is in the $\high$ state. %\textcolor{red}{Have you defined HIGH with all capital letters?}
The visualization is generated in milliseconds.
Since all 32 threads of the first server are in use, the second server's queue contains at least 32 pending requests. As a result, the range of queue lengths in the visualization only includes values greater than 32. At first glance, the system might appear stable. However, a closer inspection reveals that for states where the orbit length is near 15, the transitions between different states \emph{almost} balance each other (as indicated by the amplitude of the arrows, visible through the colorbar). This suggests that starting from states near this region, the system may become stuck for a relatively long time. To gain a more precise understanding of the system's metastability, we perform further quantitative analysis.

%\begin{wrapfigure}{r}{5.5cm}
\begin{figure}
%\vspace*{-0.5cm}
	\centering
		% \begin{minipage}[]{0.4\textwidth}   
         \hspace*{1cm}\includegraphics[scale=0.5]{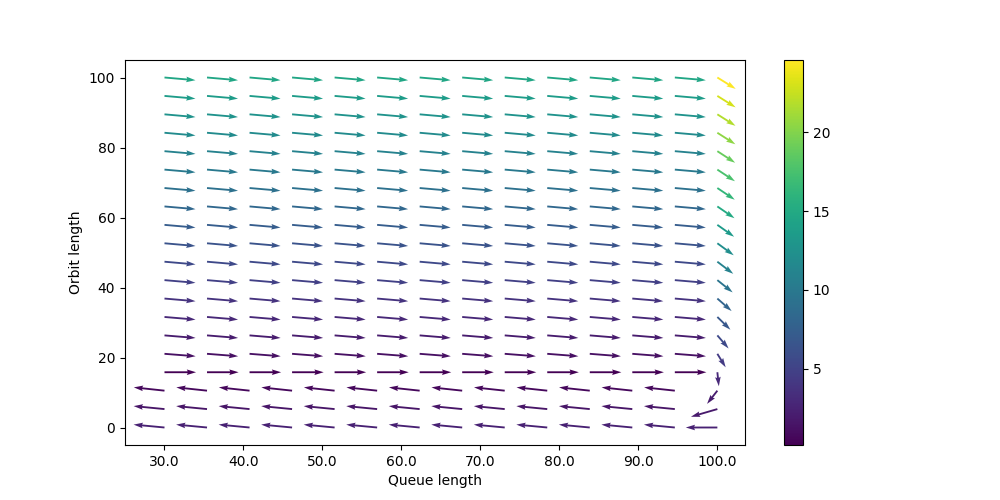}
	% \end{minipage}
	\caption{Visualization of the stochastic dynamics for the second server.}
%\vspace*{-.5cm}
	\label{fig:viz_two_server}
\end{figure}
%\end{wrapfigure}
The CTMC representing the server system corresponds to a generator matrix with  $16 \times 10^{12}$ entries, 
which is too large to keep explicitly. 
In our implementation, we use \emph{black-box linear algebra} techniques \cite{Wiedemann86} 
to perform the required computations. 
%Based on \cref{thm:hitting_time_spectral_rel} and the relationship between eigenvalues and mixing time (see \textcolor{blue}{\cref{lem:mixing_time}}),
We use the system's mixing time as a proxy for detecting metastability. 
For the given parameterization, we found the mixing time to be $10^7$s, which is much larger than the time scale of the CTMC,
showing metastability.

Figure~\ref{fig:mixing-time-analysis} shows how the mixing time 
varies with queue length, orbit length and processing rate. 
% For each specific parameter, we assign the same values to both servers, 
% while keeping the remaining parameters consistent with the values used in the concrete example described above. 
Increasing the queue and orbit lengths increases the mixing time, which is expected, 
as a larger number of requests in the main queue and orbit space effectively prolongs the transition from $\high$ to $\low$. 
As in the single-server experiment, changing the processing rate does not affect the mixing time in a monotonic manner. 
Both very low and very high processing rates result in stable or unstable behaviors with 
short mixing times, while intermediate rates lead to metastable configurations.

\begin{comment}
\begin{figure}[h]
    \centering
    \includegraphics[width=.4\textwidth]{../figs/new/fault_output.png} 
    \caption{Illustration of slow fault recovery for the two-server system with closed architecture, despite resetting to a stable configuration %\Mahmoud{I need to replace this with one that is similar to Fig. 14 (Left) to show the second server is the bottleneck.}
    }
    \label{fig:fault_closed_two_server}
\end{figure}
\end{comment}

\begin{figure}[t]
\begin{center}
	\begin{minipage}{0.32\linewidth}
		\centering
		\includegraphics[width=\textwidth]{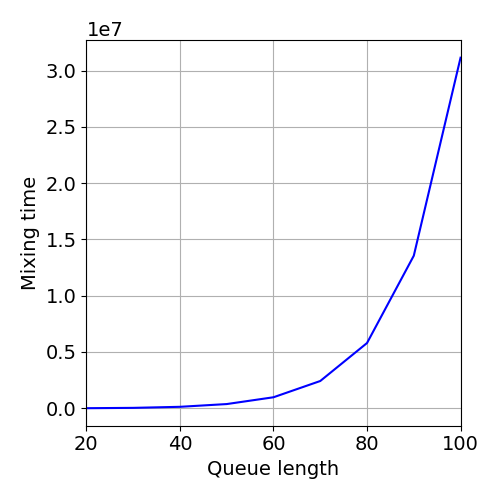}
        %\caption*{(a)}
	\end{minipage}~
	\begin{minipage}{0.32\linewidth}
	 	\centering
	 	\includegraphics[width=\textwidth]{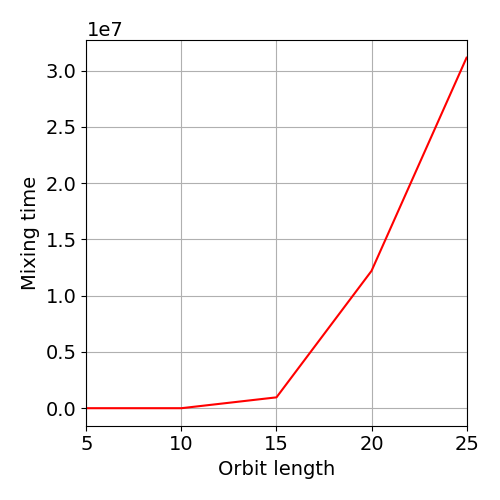} 
         %\caption*{(b)}
	 \end{minipage}~
    \begin{minipage}{0.32\linewidth}
		\centering
		\includegraphics[width=\textwidth]{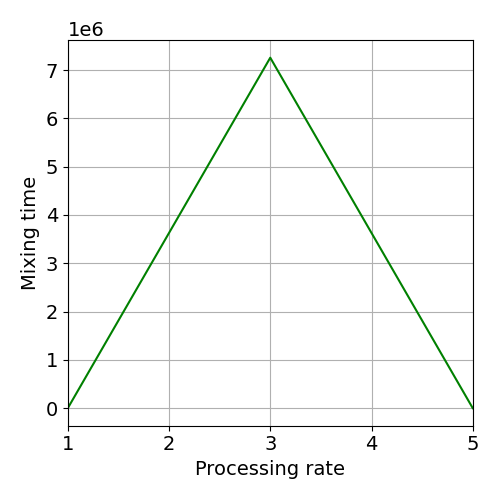} 
        %\caption*{(c)}
	\end{minipage}

	% \end{minipage}
 %   
 \end{center}	\caption{Mixing time vs.\ (a) queue length, and (b) orbit length, (c) 
 processing rate.
 %\textcolor{red}{The numbers in the vertical axis, you have 1e7, this is not a standard way of writing numbers in papers. Please adjust and writ it as $\times 10^7$ or something similar. Make the curves thicker.}
 % and (d) timeout %\Mahmoud{I need to replace the second row plots to include more data points.} \label{fig:mixing-time-analysis}
	}\label{fig:mixing-time-analysis}
\end{figure}

	\section{Related Work}\label{sec:related_work}

Our work is part of a larger project that aims to understand, predict, and mitigate metastable failures in large-scale cloud infrastructure \cite{analyzing-metastable-faiures-hotos-2025}.
While we focus here on the formal modeling and analysis, the larger context
also involves tuning the simulator with a service emulator as well as connecting the analysis workflow with workload testing of the actual service.

\paragraph{Metastability in the sciences}
Metastability is a widespread phenomenon in physical systems \cite{StochasticThermodynamics,MolecularDrivingForces,FW}.
% Loosely speaking, a metastable system is one which, on short time scales, stays within a small subset of its state space,
% but which on longer time scales, performs transitions between different such subsets.
% There are many examples of metastable systems in nature.
For instance, in statistical mechanics, phase transition phenomena such as magnetic hysteresis or condensation of over-saturated water vapor
are examples in which a system remains ``persistently'' in one state and then rapidly transitions into another in the presence of some rare event.

Metastability has been studied formally in the context of perturbed dynamical systems \cite{FW}.
In the context of Markov processes, Bovier et al.~\cite{BEGK1,BEGK2002,MetastabilityBook} defined metastability for discrete-time Markov chains and characterized metastability using
potential theory and spectral methods.
Spectral techniques for metastability go back to the work of Davies \cite{Davies1982a,Davies1982b,Davies1983}. 
Most of their results hold for reversible Markov chains, with more technical extensions to the non-reversible case.
Betz and Le Roux studied metastability for perturbed Markov chains, considering the asymptotics of metastability as the perturbation parameter goes to zero \cite{Betz2016}.

\paragraph{Metastable Failures in Systems}
In the context of computer systems, Bronson et al.~\cite{bronson2021metastable} introduced the term metastable failures, and gave examples and informal definitions of such failures.
They pointed out that such failures have been studied in several settings in systems and networking, often with different names, such as persistent congestion \cite{AWS2021_persistent},
retry storms \cite{Azure2021_storm}, or cascading failures \cite{cascade2016}.
Huang et al.~\cite{Huang2022systemmetastability} performed an extensive empirical study that demonstrated metastable failures are a common cause of published 
outages in many large software organizations.
They refined the informal characterizations of Bronson et al.~and reproduced such failures empirically.

\paragraph{CTMC Models for Metastability}
CTMCs have been recently proposed as models of metastable server systems \cite{habibi2023msfmodel}, and their work is close to ours and an inspiration for us.
We improve upon their work in several ways.
We generalize their CTMC model to more features, including mixtures of APIs and handling multi-server
systems, which is important in modeling real scenarios.
Furthermore, we calibrate the model using simulation.
As we point out, calibrating the model is crucial in providing quantitative predictions that match the ``ground truth'': the non-calibrated models such as theirs deviate from reality already for simple systems.

\citet{habibi2023msfmodel} also provide a definition of metastability
as the average distance from the origin at the stationary distribution.
This notion captures the distinction between stability and instability, but not metastability. 
As we remark in \cref{sec:experiments}, an unstable system has an attractor
in a $\mathsf{High}$ state, but this does not necessarily indicate metastable failures.
In contrast, we provide a definition of metastability that is mathematically robust and captures dynamics at different time scales. 

Finally, our analyses are substantially different from theirs: \cite{habibi2023msfmodel} performs Monte Carlo simulations to collect finite-horizon empirical probability distributions; in contrast, we provide both qualitative visualizations and formal analysis based on expected recovery times.
%They do show, by simulation of the CTMC model, that metastable behaviors can be reproduced on their model.

\paragraph{Queueing Systems}
The analysis of servers, requests, and queueing is the domain of \emph{queueing theory} (see, e.g., \cite{Kleinrock1975}).
Retrial queueing models \cite{Artalejo2008} capture the behavior of retry policies by maintaining a (possibly infinite) \emph{orbit} for requests waiting to be retried.
The semantics of queueing models is also given as continuous-time Markov processes.
Our point of departure from classical queueing theory is twofold.
First, we define metastability in queueing models and consider algorithms for analyzing metastability.
% We do not expect closed-form mathematical solutions.
Classical queueing models such as M/M/c queues do not demonstrate metastable behaviors: such behaviors are seen only when we add retrials into our model.
Second, our queueing model captures features that are specific to the domain of software systems, such as many instances of the same request existing in the system
(either in the queue or in the orbit) at the same time.
In many queueing models, a request only has one instantiation in the system: a full queue causes a retry, but there is no explicit handling of timeouts due to long latencies
that add additional requests to the system while the original requests still wait in the queues. 
However, this is a common pattern in software systems.

\paragraph{Probabilistic verification}
\begin{comment}
CTMCs have been used as models for system performance \cite{harchol2013performance,Hillston1995,BaierHHK05} and there is a rich literature on model checking them for correctness or performance properties \cite{Haverkort2002,BaierHKH05}. 
Temporal logics such as Continuous Stochastic Logic (CSL) have been widely used for specifying and verifying properties of CTMCs \cite{Aziz:2000,Baier:2003} by introducing real-time constraints on temporal operators. While CSL is powerful for formal verification, it has no means to investigate existence of different time scales, and thus is complementary to our goal of analyzing metastability.
\cite{Ballarini:2009} studies oscillatory behaviors in biochemical processes using CSL and Probabilistic Computation Tree Logic (PCTL). However, investigating metastability requires tools beyond simply encoding oscillatory behaviors. Overall, specification formalisms such as probabilistic temporal logics and model checking algorithms have %focus either on the transient or on the steady-state behaviors of the system, and did
not considered metastable behaviors or ``almost invariant'' behaviors so far. 
\end{comment}
CTMCs have long been employed as models for analyzing system performance~\cite{harchol2013performance,Hillston1995,BaierHHK05}, and a substantial body of work exists on model checking CTMCs against correctness and performance properties~\cite{Haverkort2002,BaierHKH05}. Temporal logics such as Continuous Stochastic Logic (CSL) have been widely used to specify and verify properties of CTMCs~\cite{Aziz:2000,Baier:2003}, extending traditional temporal logics with real-time constraints to reason about time-bounded behaviors. While CSL provides a powerful formalism for verification, it lacks the ability to capture multi-scale dynamics or investigate the presence of metastability. Ballarini et al.~\cite{Ballarini:2009} investigate oscillatory behavior in biochemical processes using CSL and probabilistic Computation Tree Logic (pCTL). The oscillations they model are related to recurrent behavior for a subset of the Markov chain’s state space, which are fundamentally different from properties such as almost-invariance that characterize metastable dynamics.
 In summary, existing specification formalisms and model checking techniques focus primarily on transient or steady-state behaviors and have not addressed metastability or ``almost-invariant'' behaviors.

\paragraph{Learning CTMCs from data.} Previous work has addressed the problem of learning CTMC models, generally following two distinct approaches. The first directly estimates the transition rates of the CTMC from data, while the second focuses on identifying the set of parameters that best fit the data, under a parametric formulation of the transition rates. 

Within the first category, \cite{WEI2002129} proposes a method for learning continuous-time hidden Markov models aimed at performance evaluation. In their framework, time series observations are treated as periodic samples taken at fixed intervals. The learning procedure proceeds in two stages: first, a maximum likelihood estimation algorithm is used to infer the transition probability matrix of a discrete-time hidden Markov model; then, the generator matrix of the CTMC is derived from the learned transition matrix. In contrast, \cite{Bacci2023} introduces a more direct approach that simplifies the process by learning the CTMC generator matrix without transitioning through a discrete-time model. Related to the second category, the Evolving Process Algebra \cite{Marco:2011} framework uses genetic algorithms to find parametrization of models written in the PEPA language \cite{Hillston:1996}, such that the behavior of the model matches an observed time series. Probabilistic Programming Process Algebra (ProPPA) \cite{Georgoulas2014} allows some transition rates to be assigned a prior distribution, capturing the modeler’s belief about the likely values of the rates. Using Bayesian inference, prior model is combined with the observations to derive updated probability distributions over parameter values.

Our calibration method falls into the second category: we employ CMA-ES~\cite{Hansen:2005} %\textcolor{red}{Add reference here}
to explore the parameter space and identify the set of parameters that best fit a collection of discrete-time observation trajectories. These trajectories represent empirical averages of specific quantities over time. Beyond the choice of objective function and optimization strategy, a key distinction between our framework and ProPPA is that, in our case, the observations are not generated by a CTMC but rather by a high-fidelity simulator. The CTMC serves as an abstract model of the simulator's behavior. Nevertheless, we empirically demonstrate that our prior model is sufficiently expressive to capture the essential dynamics of the simulator after calibration.

%In contrast to these approaches, our calibration method leverages the parametric models introduced in Section~\ref{sec:ctmc_models}, and seeks the set of parameters that best aligns the model's output with the observed trajectories.

	\section{Conclusion}\label{sec:conclusion}

We have provided the first formal lens on an important industrial
problem.
We have formalized metastability in systems as metastable dynamics in stochastic processes.
Moreover, we have shown how such stochastic models of request-response systems can be constructed through a combination of formal modeling
and data-driven optimization from system descriptions.
The stochastic models provide qualitative (visual) and quantitative predictions about metastable behaviors.
We have shown that computational techniques based on spectral analysis can be used to provide quantitative predictions from the stochastic processes.

% When considering the large scale behavior of server systems, one wonders if formal methods are doomed from the start, 
% since modeling the concurrent interactions of even a very small number of requests already leads to astronomical sets of possibilities.
% By abandoning the attempt to follow each request precisely, and taking instead the statistical average of collections of requests,
% we show that we can make qualitative and quantitative predictions on an important class of software failures.
%
% This insight comes from statistical physics. 
% In this paper, we have shown how the insight can be profitably applied to formally approach an important class of software failures.
%
% In particular, by focusing on the average behaviors of the system, we have shown how to formally define metastability in this context and how to compute metastable states
% efficiently using the theory of Markov chains.
% The models have both predictive and explanatory value: we have shown how they allow qualitative reasoning about metastability as well as quantitative predictions about
% system behaviors that match simulations.
%
We have scratched the surface of metastable dynamics in software systems: 
while we have focused on the important case of request-response systems,
we expect that our definitions, models, and analyses will be applicable to many other instances of metastability in systems.
In a broader context, as mentioned in \cite{analyzing-metastable-faiures-hotos-2025}, the analysis here is one part of a larger
effort to understand and prevent metastable failures in cloud systems.

        \newpage
        \bibliographystyle{ACM-Reference-Format}
	\bibliography{references}

	\newpage
	\appendix
    \section{Detailed Theoretical Results}\label{app:theory_extended}
\subsection{Further Details about CTMCs}
In Section~3.2 of the main paper, we introduced key properties of CTMCs, including the generator matrix, the forward Chapman–Kolmogorov differential equation, and the embedded Markov chain. Here, we provide a more detailed discussion of additional important concepts related to CTMCs. 

In a CTMC $\M$, a state $j\in S$ is \emph{reachable} from state $i\in S$ if $\prob(X(t) = j \mid X(0) = i) = P_{ij}(t) > 0$ for some $t\geq 0$.
States $i$ and $j$ communicate if $i$ is reachable from $j$ and $j$ is reachable from $i$. A CTMC in which every state can be reached from every other state is called an \emph{irreducible} CTMC.
%Two states communicate iff they do so in the \textcolor{red}{embedded discrete-time Markov chain}.
%A CTMC is called \emph{irreducible} iff every pair of states communicate iff the embedded discrete time chain is irreducible.
%
%Definition 4.19
Let $\M = (S,Q)$ be an irreducible CTMC with countable state space $S$. $\M$ is called \emph{transient} if for all $x\in S$, 
$\mathbb{P}_x\left(\tau_x<\infty\right)<1$, where $\tau_x$ is the first hitting time of $x$ defined as %as defined in \eqref{eq:hitting_time}.
\begin{equation*}
 \label{app:eq:hitting_time}
  \tau_x = \inf \set{ t > 0 \mid X(t) = x}. 
 \end{equation*}
$\M$ is called \emph{recurrent} if it is not transient.
$\M$ is called \emph{positive recurrent} if for all $x \in S$, $\mathbb{E}_x\left[\tau_x\right]<\infty$.
$\M$ is \emph{aperiodic} if its embedded Markov chain is aperiodic.
$\M$ is called \emph{ergodic} if it is irreducible, positive recurrent and aperiodic.
A stationary distribution of the CTMC $\M$ is $\piinf\in[0,1]^{S}$ such that $\piinf Q = 0$. %In this case, $\pi(t) = \piinf$ will be the unique solution of \eqref{eq:Kolmog} with initial distribution $\pi(0) = \piinf$.
A CTMC may have in general more than one stationary distribution. For ergodic CTMCs, the stationary distribution $\piinf$ is unique and that $\lim_{t\rightarrow\infty}\pi(t) = \piinf$ for any initial distribution $\pi_0$.

For an ergodic CTMC $\M$ with stationary distribution $\piinf$, define the distance to the stationary as
\begin{equation*}
	d(t) = \sup_x \|P_{x,\cdot}(t) - \piinf\|_{\textsf{TV}},\quad \forall t\ge 0,
\end{equation*}
where $P_{x,\cdot}(t)$ is the probability distribution over $S$ at time $t$ starting from $x$, and $\|\cdot\|_{\textsf{TV}}$ indicates the total variation distance between two probability distributions defined as 
\begin{equation*}
	\|\pi_1-\pi_2\|_{\textsf{TV}} = \frac{1}{2}\sum_{x\in S} |\pi_1(x)-\pi_2(x)| = \sup_{A\subset S}\|\pi_1(A)-\pi_2(A)\|.
\end{equation*}
For a given $\varepsilon>0$, the \emph{mixing time} $\tmix(\varepsilon)$ of $\M$ is defined as
\begin{equation*}
	\tmix(\varepsilon) = \inf\{t\ge 0\mid d(t)\le \varepsilon\}.
\end{equation*}
\begin{lemma}\label{lem:mixing_time}
	The mixing time can be lower bounded using the following inequality
	\begin{equation}
		\label{eq:mixing_bound}
		\tmix(\varepsilon)\ge \frac{\log(1/2\varepsilon)}{\textsf{Re}(\lambda_{min})}, \quad \forall \varepsilon>0,
	\end{equation}
	where $\lambda_{min}$ is the non-trivial eigenvalue of $Q$ with the smallest real part.
	%$\tau_{\textsf{rel}}$ inverse of the second real eigenvalue of $Q$ ($I-P$ in the DTMC)\\
	%$\mathcal L = -(I-P)$
\end{lemma}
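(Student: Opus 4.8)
The plan is to relate the exponential decay of a right eigenfunction of $Q$ to the total-variation distance $d(t)$, and then read off the bound at $t=\tmix(\varepsilon)$. Let $\lambda$ be any nontrivial (nonzero) eigenvalue of $Q$ — in particular $\lambda=\lambda_{min}$ — and let $f\in\mathbb{C}^{S}$ be a corresponding right eigenvector, so $Qf=\lambda f$. The first ingredient is a \emph{centering} identity coming from biorthogonality of left and right eigenvectors: since $\piinf$ is the left eigenvector for eigenvalue $0$, we have $\lambda\,\piinf f=\piinf Q f=(\piinf Q)f=0$, and because $\lambda\neq 0$ this forces $\sum_{y}\piinf(y)f(y)=0$. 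Note this uses only that $f$ is a genuine eigenvector and $\lambda\neq 0$, so no diagonalizability assumption on $Q$ is needed.

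Next I would invoke the fundamental solution $P(t)=e^{Qt}$ of the Kolmogorov equation \cref{eq:Kolmog}. Applying it to the eigenvector gives $P(t)f=e^{Qt}f=e^{\lambda t}f$, i.e.\ $(P(t)f)(x)=e^{\lambda t}f(x)$ for every $x\in S$. Subtracting the (vanishing) stationary average established above yields
\[
e^{\lambda t}f(x)=\sum_{y}\bigl(P_{xy}(t)-\piinf(y)\bigr)f(y).
\]
Taking moduli and choosing $x^\ast\in\operatorname{argmax}_x|f(x)|$, I bound $|f(y)|/|f(x^\ast)|\le 1$ to obtain
\[
e^{\textsf{Re}(\lambda)t}=\bigl|e^{\lambda t}\bigr|\le\sum_{y}\bigl|P_{x^\ast y}(t)-\piinf(y)\bigr|=2\,\bigl\|P_{x^\ast,\cdot}(t)-\piinf\bigr\|_{\textsf{TV}}\le 2\,d(t),
\]
where the middle equality is the $\tfrac12$-normalization of the total-variation distance and the last inequality is the definition of $d(t)$.

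Finally I evaluate at $t=\tmix(\varepsilon)$. Since $d$ is non-increasing and continuous with $d(t)\to 0$, we have $d(\tmix(\varepsilon))\le\varepsilon$, hence $e^{\textsf{Re}(\lambda)\tmix(\varepsilon)}\le 2\varepsilon$. As $\lambda$ is a nontrivial eigenvalue of the generator of an ergodic finite chain, $\textsf{Re}(\lambda)<0$; taking logarithms and dividing by the negative number $\textsf{Re}(\lambda)$ reverses the inequality and gives
\[
\tmix(\varepsilon)\ge\frac{\log(2\varepsilon)}{\textsf{Re}(\lambda)}=\frac{\log(1/2\varepsilon)}{|\textsf{Re}(\lambda)|},
\]
which is \cref{eq:mixing_bound} once we take $\lambda=\lambda_{min}$. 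For $\varepsilon\ge 1/2$ the right-hand side is non-positive, so the bound is trivial and holds for all $\varepsilon>0$.

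The calculation is short, so the real care lies in three small points rather than in any heavy machinery: justifying $\piinf f=0$ for a possibly complex eigenvector without assuming $Q$ is diagonalizable (handled by the biorthogonality computation), correctly passing to $|e^{\lambda t}|=e^{\textsf{Re}(\lambda)t}$ for complex $\lambda$, and confirming $\textsf{Re}(\lambda)<0$, which follows from ergodicity — the generator of an ergodic finite CTMC has a simple eigenvalue $0$ and all remaining eigenvalues with strictly negative real part. The main conceptual step, and the one I would emphasize, is the centering identity, since it is precisely what allows the stationary distribution to appear and the bound to be phrased in terms of $d(t)$.
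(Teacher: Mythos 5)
Your proof is correct, but it takes a genuinely different route from the paper. The paper does not prove the inequality from scratch: it cites Theorem~4.9 of Montenegro and Tetali for CTMCs normalized so that $\max_i |Q(i,i)| = 1$, and then extends it to arbitrary generators by a time-rescaling argument (if $Q_2 = \alpha Q_1$ then $\tmix^{\M_2}(\varepsilon) = \tfrac{1}{\alpha}\tmix^{\M_1}(\varepsilon)$ while $\textsf{Re}(\lambda_{min})$ scales by $\alpha$, so both sides of the bound scale identically). Your argument, by contrast, is self-contained: the centering identity $\piinf f = 0$ for a right eigenvector with $\lambda \neq 0$, the exact relation $e^{Qt}f = e^{\lambda t}f$, and the sup-norm estimate at $x^\ast = \operatorname{argmax}_x |f(x)|$ give $e^{\textsf{Re}(\lambda)t} \le 2d(t)$ directly, for \emph{any} nontrivial eigenvalue and without reversibility or diagonalizability assumptions — this is essentially the classical eigenvalue lower bound (cf.\ the discrete-time analogue in Levin--Peres) transplanted to continuous time. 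What your approach buys is transparency and generality: the bound is visibly valid for every nonzero eigenvalue, so in particular for the one with real part closest to zero, which yields the strongest bound; what the paper's approach buys is brevity, at the cost of depending on an external theorem. One further merit of your derivation: it makes explicit that the denominator must be $|\textsf{Re}(\lambda_{min})|$ (as written in the paper's main text) rather than the signed quantity $\textsf{Re}(\lambda_{min})$ appearing in the appendix statement, since all nontrivial eigenvalues of an ergodic generator have strictly negative real part and the division step reverses the inequality. The only point worth stating a touch more carefully is $d(\tmix(\varepsilon)) \le \varepsilon$: on a finite state space $d$ is continuous (a maximum of finitely many continuous functions of $t$), so the set $\{t \ge 0 \mid d(t) \le \varepsilon\}$ is closed and the infimum defining $\tmix(\varepsilon)$ is attained, which is all you need.
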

\begin{proof}
	This inequality is proved by Montenegro and Tetali~\cite[Theorem 4.9]{MontenegroT05} for CTMCs with $\bar E = \max_i |Q(i,i)| = 1$.
	%\begin{equation}
	% \label{eq:mixing_bound2}
	%     \tmix(\varepsilon)\ge \frac{\log(1/2\varepsilon)}{1-\textsf{Re}(\lambda')}, \quad \forall \varepsilon>0,
	% \end{equation}
	% where $\lambda'$ is the non-trivial eigenvalue of $\frac{Q}+\mathrm{I}$ with the largest real part.
	%where $\bar E = \max_i Q(i,i)$.\\
	The same inequality holds for a general $\bar E$. To see this, take any two CTMCs $\M_1 = (S,Q_1)$ and $\M_2 = (S,Q_2)$ with $Q_2 = \alpha Q_1$ for some $\alpha>0$. Using the properties of Chapman-Kolmogorov equation with respect to scaling time, we get that $P_{x,\cdot}^{\M_2}(t) =P_{x,\cdot}^{\M_1}(\alpha t) $ and $d^{\M_2}(t)  = d^{\M_1}(\alpha t)$, which give $\tmix^{\M_2}(\varepsilon) = \frac{1}{\alpha}\tmix^{\M_1}(\varepsilon)$. Then, the left hand side of \eqref{eq:mixing_bound} will be different for $\M_1,\M_2$ by a factor of $1/\alpha$. We also have $\textsf{Re}(\lambda_{min}^{\M_2}) = \alpha\textsf{Re}(\lambda_{min}^{\M_1})$, which gives exactly the same factor to the right hand side of \eqref{eq:mixing_bound}. 
	%We also have $P^{\M_2} = \mathrm{I} + Q_2/\bar E$ and $P^{\M_2} = \mathrm{I} + Q_2/\bar E$
\end{proof}

For a state $x\in S$ and set $D\subset S$, the \emph{escape probability} from $x$ to $D$ is defined as $\prob_x(\tau_D<\tau_x)$.

\subsection{Proof of Theorem~5.2 of the main paper}
%\begin{proof}[Proof of \cref{thm:hitting_time_spectral_rel}]
	We first show that a CTMC $\M = (S, Q)$ is $\rho$-metastable according to Definition~5.1 in the main paper if and only if its embedded DTMC 
	%with transition probability matrix $P\coloneqq \frac{Q}{\bar E}+\mathrm{I}$ 
	is $\rho$-metastable.
	To see this, take one realization of the CTMC $X(t)$ and represent it as a sequence of states and their respective holding times $(X_0,H_0,X_0,H_1,\ldots)$. The associated realization of the embedded DTMC is $(X_0,X_1,X_2,\ldots)$. Define the hitting time of any set $A$ in the DTMC with $\bar \tau_A = \inf \set{ n > 0 \mid X_n \in A}$. Then, the hitting time in the CTMC satisfies $\tau_A = \sum_{n=0}^{\bar \tau_A-1} H_n$. This relation gives that for a fixed realization and any two sets $A$ and $B$, $\bar \tau_A<\bar \tau_B$ if and only if $\tau_A<\tau_B$. Then,
	$\mathbb{P}_y\left(\tau_D<\tau_y\right) = \mathbb{P}_y\left(\bar \tau_D<\bar \tau_y\right)$
	and
	$\mathbb{P}_x\left(\tau_{D \backslash x}<\tau_x\right) = \mathbb{P}_x\left(\bar \tau_{D \backslash x}<\bar \tau_x\right)$.
	Therefore, the fraction in Equation~(13) in the main paper will be the same when computed on the CTMC and its embedded DTMC. Similarly, a metastable set $D$ is non-degenerate in the CTMC if and only if it is non-degenerate in the corresponding embedded DTMC. Also note that if $Q$ is multiplied by a constant, both sides of Equation~(14) in the main paper are multiplied by the same constant.
	The rest of the proof follows by applying Theorem 8.43 in \cite{MetastabilityBook} stated for DTMCs to the embedded DTMC of $\M$ and adapting it to the matrix $Q$ of $\M$.
\end{document}